\documentclass[11pt, reqno]{amsart}
\usepackage[margin=3.6cm]{geometry}
\usepackage{hyperref}
\usepackage{array,booktabs,tabularx}
\usepackage{graphicx}
\usepackage{amssymb}
\usepackage{enumerate}
\usepackage{color}
\usepackage{esint}
\usepackage{mathtools}
\usepackage{dsfont}
\usepackage{ esint }
\usepackage{enumitem}
\usepackage{wrapfig}
\usepackage{caption}
\usepackage{subcaption}

\newtheorem{theorem}{Theorem}

\newtheorem{proposition}[theorem]{Proposition}

\newtheorem{Lem}[theorem]{Lemma}

\newtheoremstyle{named}{}{}{\itshape}{}{\bfseries}{.}{.5em}{\thmnote{#3 }#1} \theoremstyle{named} 

\theoremstyle{definition}
\newtheorem{definition}{Definition}
\newtheorem{remark}{Remark}

\newcommand{\leb}{\lambda}

\newcommand{\R}{{\mathbb R}}

\newcommand{\ep}{\varepsilon}

\newcommand{\gives}{\ensuremath{\rightarrow}}

\newcommand{\setst}[2]{\ensuremath{ \{ #1\,|\,#2 \}}}

\newcommand{\abs}[1]{\ensuremath{\left| #1 \right|}}
\newcommand{\lr}[1]{\ensuremath{\left(#1 \right)}}
\newcommand{\norm}[1]{\left\lVert#1\right\rVert}
\newcommand{\inprod}[2]{\ensuremath{\left\langle#1,#2\right\rangle}}

\newcommand{\w}{\omega}
\newcommand{\dell}{\ensuremath{\partial}}
\newcommand{\set}[1]{\ensuremath{\{#1\}}}

\def\XXint#1#2#3{{\setbox0=\hbox{$#1{#2#3}{\int}$} \vcenter{\hbox{$#2#3$}}\kern-.5\wd0}}

\DeclareMathOperator{\vol}{vol}

\newcommand{\N}{\mathbb N}

\newcommand*\pFqskip{8mu}
\catcode`,\active
\newcommand*\pFq{\begingroup
        \catcode`\,\active
        \def ,{\mskip\pFqskip\relax}%
        \dopFq
}
\catcode`\,12
\def\dopFq#1#2#3#4#5{%
        {}_{#1}F_{#2}\biggl[\genfrac..{0pt}{}{#3}{#4};#5\biggr]%
        \endgroup
}

\DeclareMathOperator{\HO}{HO}
\DeclareMathOperator{\Op}{P}
\DeclareMathOperator{\Spec}{Spec}


\title[Nodal Sets and Level Spacings]{Level Spacings and Nodal Sets at Infinity for Radial
  Perturbations of the Harmonic Oscillator}

\author[T. Beck]{Thomas Beck}
\author[B. Hanin]{Boris Hanin}

\address[T. Beck]{Department of Mathematics, MIT, Cambridge, United States.\medskip}
 \email{tdbeck@mit.edu}
\address[B. Hanin]{Department of Mathematics, Texas A\&M, College
  Station, United States.\medskip}
\email{bhanin@math.tamu.edu}
\setcounter{section}{0}

\begin{document}
\maketitle
\begin{abstract}
We study properties of the nodal sets of high frequency eigenfunctions
and quasimodes
for radial perturbations of the Harmonic Oscillator. In particular, we
consider nodal sets on spheres of large radius (in the classically
forbidden region) for quasimodes with energies lying in intervals
around a fixed energy $E$. For well chosen intervals we show that
these nodal sets exhibit quantitatively different behavior compared to
those of the unperturbed Harmonic Oscillator. These energy intervals
are defined via a careful analysis of the eigenvalue spacings for the
perturbed operator, based on analytic perturbation theory and
linearization formulas for Laguerre polynomials.   
\end{abstract}

\section{Introduction}\label{S:introduction}
In this article we obtain new information about nodal (i.e. zero) sets
of high frequency eigenfunctions 
and eigenvalue spacings for semi-classical Schr\"odinger operators
that are small radial perturbations of the
isotropic Harmonic Oscillator: 
\begin{equation}\label{E:operator-def}
 \Op_\hbar(\ep):= \HO_\hbar + \ep \hbar V(\abs{x}^2),\quad
 \HO_\hbar:=-\frac{\hbar^2}{2}\Delta_{\R^d} +
 \frac{\abs{x}^2}{2},\qquad d\geq 2.
\end{equation}

See \eqref{E:potential-assumptions} for the assumptions we place on $V.$
Although much is known about nodal
sets of eigenfunctions of the Laplacian on a compact manifold,
comparatively little has been proved about nodal sets of eigenfunctions of
Schr\"odinger operators $-\tfrac{\hbar^2}{2}\Delta+V(x)$, even on $\R^d.$
When $V(x)\gives +\infty$ as $\abs{x}\gives
\infty,$ such operators have a discrete spectrum and a complete eigenbasis
for $L^2(\R^d,dx).$ Fixing an energy $E>0,$ and letting $\hbar\gives 0$, any
energy $E$ eigenfunction $\psi_{\hbar, E}$ of $-\tfrac{\hbar^2}{2}\Delta+V(x)$ is rapidly oscillating (with frequency
$\hbar^{-1}$) in the classically allowed region
\[\mathcal A_E:=\setst{x\in \R^d}{V(x)\leq E}\]
and exponentially decaying in the
classically forbidden region 
\[\mathcal F_E:=\setst{x\in \R^d}{V(x)>E}.\]
The nodal set of $\psi_{\hbar, E}$ undergoes a qualitative change
as it crosses from $\mathcal A_E$ to $\mathcal F_E.$ This transition
is illustrated in Figure 1. 

In $\mathcal A_E,$ the eigenfunction $\psi_{\hbar,E}$ behaves much like
an eigenfunction of the Laplacian. For instance, if $V$ is real analytic, then
Jin \cite{jin2017semiclassical} proved that for any bounded open $B\subseteq \mathcal
A_E$ there exists $c,C>0$ such that
\begin{equation}\label{E:allowed-zeros}
c\hbar^{-1}\vol(B)\leq \mathcal H^{d-1}\lr{\set{\psi_{\hbar,E}=0}\cap
  B}\leq C\hbar^{-1}\vol(B).
\end{equation}

\begin{figure}
        \centering

        \begin{subfigure}[b]{0.32\textwidth}
                \includegraphics[width=\textwidth]{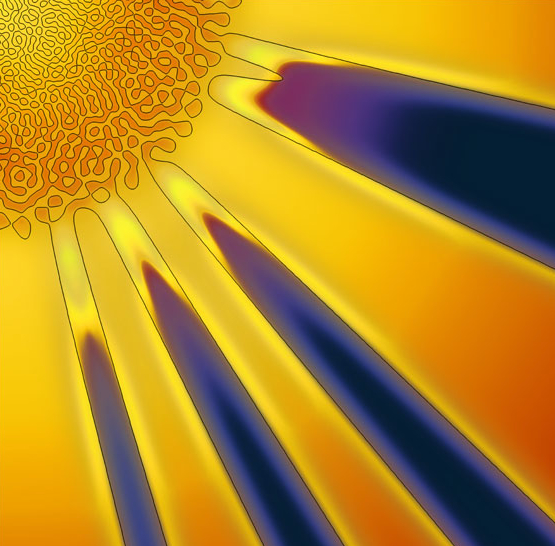}
        \end{subfigure}%
        ~
        \begin{subfigure}[b]{0.32\textwidth}
                \includegraphics[width=\textwidth]{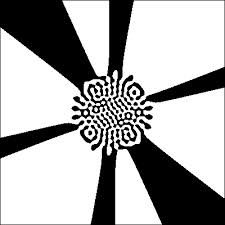}
        \end{subfigure}%
        ~ 
       \begin{subfigure}[b]{0.32\textwidth}
                \includegraphics[width=\textwidth]{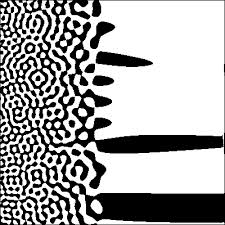}
        \end{subfigure}
\caption{Nodal sets of energy $E$ eigenfunctions of $\HO_\hbar$ have
  qualitatively different behavior in $\mathcal A_E$ and $\mathcal
  F_E.$ On the left, the nodal set is a black line and $\mathcal A_E$
  is a disk with only a quarter shown. This figure was made by Eric
  Heller \cite{Heller-gallery}. In the middle and right, the
  boundary between the white and black regions is the nodal set. The
  allowed region in the middle is the disk and on the right is the
  left half-plane. The figures in middle and on the right are
  reproduced from Bies-Heller \cite{bies2002nodal}.} 
\end{figure}

Here and throughout, $\mathcal H^k$ denotes $k-$dimensional
Hausdorff measure. The same estimates were proved for compact real analytic Riemannian manifolds by Donnelly-Fefferman \cite{donnelly1988nodal} for
eigenfunctions of the Laplacian. Except when $d=1,$ when $\psi_{\hbar,
E}$ has no zeros in $\mathcal F_E,$ much less
is known about the nodal set of $\psi_{\hbar, E}$ in $\mathcal
F_E.$ Jin established that his upper bound in \eqref{E:allowed-zeros}
continues to hold in the forbidden region. 

Aside from this, we are
aware of only several strands of prior work on the subject. The oldest
are the articles of Hoffman-Ostenhof \cite{hoffmann1988asymptotics2, hoffmann1988asymptotics} and
Hoffman-Ostenhoff-Swetina \cite{hoffmann1986continuity,
  hoffmann1987asymptotics} that study nodal for potentials that
vanish at infinity. They show 
that the nodal set of an eigenfunctions on the sphere at infinity
looks locally like the nodal
set of a Hermite polynomial. There is also the paper of Canzani-Toth
\cite{canzani2016nodal} about the persistence of forbidden
hypersurfaces in nodal sets of Schr\"odinger eigenfunctions on a
compact manifold and the articles of B\'erard-Helfer
\cite{berard2014number, berard2015nodal, berard2017some} on nodal
domains for eigenfunctions of the harmonic oscillator and similar
operators (mainly in the allowed region). Finally, we mention the articles of
Hanin-Zelditch-Zhou \cite{hanin2014nodal, hanin2017scaling}, 
which study the typical size of the nodal set in $\mathcal F_E$ and
near the caustic $\dell \mathcal A_E=\set{\abs{x}^2=2E}$ for
\textit{random} fixed energy eigenfunctions of $\HO_\hbar$. We also
refer the reader to the interesting heuristic physics paper of
Bies-Heller \cite{bies2002nodal}. 

In
particular, in \cite{hanin2014nodal} it is shown that for every bounded $B\subseteq
\mathcal F_E$ there exists $C>0$ depending only on the minimum and maximum
distance from a point in $B$ to $\mathcal A_E$ so that
\begin{equation}\label{E:avg-fobidden-zeros}
\mathbb E\left[\mathcal H^{d-1}\lr{\set{\psi_{\hbar, E}=0}\cap B}\right]= C\hbar^{-1/2} \vol(B)\lr{1+O(\hbar)}.
\end{equation}
While the typical nodal density for $\psi_{\hbar, E}$ in $\mathcal F_E$ is therefore $\hbar^{-1/2}$, there are no mathcing deterministic upper and
lower bounds. Indeed, for every bounded open $B\subseteq
\mathcal F_E$
\begin{align*}
\inf_{ \psi_{\hbar, E}\in \ker\lr{\HO_\hbar- E}} \mathcal
H^{d-1}\lr{\set{\psi_{\hbar, E}=0}\cap B}&=0\\
\sup_{  \psi_{\hbar, E}\in \ker\lr{\HO_\hbar- E}} \mathcal
  H^{d-1}\lr{\set{\psi_{\hbar,E}=0}\cap B}&=C\hbar^{-1}\vol(B).
\end{align*}
The infimum is attained when $\psi_{\hbar,E}$ is the unique radial eigenfunction
of $\HO_\hbar$ with given energy $E,$ which has no nodal set whatsoever in $\mathcal F_E,$
and the supremum is attained when $\psi_{\hbar,E}$ is any of the purely
angular eigenfunctions, which are eigenfunctions of the Laplacian on
$S^{d-1}$ of frequency $\approx \hbar^{-1}.$

The difference in the exponents in the various estimates above
raises the question of what happens to the nodal sets of
eigenfunctions for other Schr\"odinger operators. We take up this
question in the present article for the small radial perturbations $\Op_\hbar(\ep)$
\eqref{E:operator-def} of the harmonic oscillator. We are concerned
primarily with the behavior of nodal sets on the sphere at infinity for
eigenfunctions of $\Op_\hbar(\ep)$ with approximately the same energy. Our main results in this direction
are Theorems \ref{T:Zeros} and \ref{T:Zeros2}, which establish upper
and lower bounds on the size of the nodal set of both eigenfunctions
and certain quasi-modes near a fixed energy $E$.

Since $\Op_\hbar(\ep)$ is rotationally symmetric for all $\ep,$ 
its eigenfunctions can be obtained by separating variables (see
\eqref{E:perturbed-spectrum} and \eqref{E:perturbed-eigenfunctions}). The radial
parts of these separation of variables eigenfunctions are deformations
in $\ep$ of the Laguerre functions \eqref{E:Laguerre}, while the
angular parts are the eigenfunctions of the Laplacian on the
round sphere $S^{d-1}.$ At a fixed energy $E$, all such products have
the same rate of growth at infinity when $\ep=0$ (see
\eqref{E:HO-Efns} in \S \ref{S:HO-spec-theory}), and 
hence spherical harmonics of many different angular momenta may
contribute to the nodal set of eigenfunctions at infinity. 

However, for $\ep\neq 0,$ the energies $E_{\ell,n}^V(\ep),$ defined in
\eqref{E:perturbed-spectrum}, for different
angular momenta $\ell$ will no longer
be the same (Theorem \ref{T:Main}). Hence, since the rate of growth at
infinity of the radial eigenfunctions is an increasing function of
$E_{\ell,n}^V(\ep)$ (Proposition \ref{prop:growth}), we see that the nodal
sets at infinity of energy $\approx E$ eigenfunctions and
quasimodes for $\Op_\hbar(\ep)$ depend on the
level spacings of the perturbed energies $E_{\ell,n}^V(\ep)$ for various
angular momenta $\ell$. We obtain precise information on these level
spacings, for what we call
slowly-varying potentials $V$, in Theorem \ref{T:Main}, which is our
main techincal result. 

\section{Statement of Results} 
\noindent Theorem \ref{T:Main} concerns the eigenvalue spacings
for $\Op_\hbar(\ep).$ It holds for $V$ that satisfy 
\begin{equation}\label{E:potential-assumptions}
  V\in C^\infty(\R_+, \R),\quad \limsup_{|x|\gives \infty} |x|^{\eta}V(|x|^2) \leq C,\quad
  V(0)=V'(0)=0
\end{equation}
for some $\eta>0$ and are slowly varying in
the sense of Definition \ref{D:slowly-varying} below. The last
assumption in \eqref{E:potential-assumptions} is only a matter of convenience since $V(0)$
(resp. $V'(0)$) can be absorbed as shifts (resp. scalings) of the 
spectrum of $P_\hbar(0)=\HO_\hbar$. 

Since $\Op_\hbar(\ep)$ is rotationally symmetric for all $\ep$ its
spectrum can be decomposed as a union (with multiplicity):
\begin{equation*}
\Spec\lr{\Op_\hbar(\ep)}= \bigcup_{\ell\geq 0}
\Spec\lr{\Op_{\hbar,\ell}(\ep)},
\end{equation*}
where $\Op_{\hbar, \ell}(\ep):= \Op_{\hbar}(\ep)\big|_{L_\ell^2}$ is
the restriction of $\Op_{\hbar}(\ep)$ to functions with fixed angular 
momentum:  
\[L_\ell^2 = L^2\lr{\R_+,\, r^{d-1}dr}\widehat{\otimes}
\ker\lr{\Delta_{S^{d-1}}+\ell \lr{\ell + d-2}},\qquad L^2(\R^d, dx)=
\bigoplus_{\ell \geq 0} ~L_\ell^2.\]
In the previous line, $\Delta_{S^{d-1}}$ is the Laplacian for the round metric on
$S^{d-1},$ whose spectrum is $\set{-\ell\lr{\ell + d-2}}_{\ell \geq
  0}.$ The spectrum 
\begin{equation}\label{E:perturbed-spectrum}
\Spec\lr{\Op_{\hbar, \ell}(\ep)}=\set{E_{\ell,
    n}^V(\ep)}_{\substack{n\geq \ell,\, n\equiv \ell \text{ (mod
      2)}}},\qquad E_{\ell, n}^V(0)=\hbar\lr{n+\frac{d}{2}}
\end{equation}
of the radial operator for each angular momentum $\ell$ is simple for
small $\ep$ since it is an analytic perturbation 
of the simple spectrum 
\[\Spec\lr{\Op_{\hbar,
    \ell}(0)}=\set{\hbar\lr{n+d/2}}_{\substack{n\geq \ell,\,\,
    n\equiv \ell \text{ (mod 2)}}}.\]     
Let us fix $E>0$ and define
\begin{equation}\label{E:hbar-n}
\hbar_n:=\frac{E}{n+\frac{d}{2}},\qquad n\in \mathbb N.
\end{equation}
At $\ep=0,$ the spectra of the radial oscillators $\Op_{\hbar_n,
  \ell}(0)$ overlap and contain the same energy 
\[E=E_{\ell, n}^V(0)\] 
for all $\ell\leq n$ congruent to $n$ modulo $2.$ However, for small
$\ep$ and generic $V$, we expect the spectra 
of $\Op_{\hbar_n, \ell}(\ep)$ will be disjoint for various $\ell.$
Although we do not have a proof of this fact, Theorem \ref{T:Main}
implies that the eigenspaces of $\Op_{\hbar_n, \ell}(\ep)$ will have
bounded multiplicity uniformly in $n,\ell$ (see \eqref{E:diff-est}). Theorem \ref{T:Main}
concerns the relative positions of the 
perturbations $E_{\ell, n}^V(\ep)$ of $E$ as a function of $\ell.$
 \begin{definition}\label{D:slowly-varying}
Let $E,\delta>0.$ We say a potential $V\in C^\infty(\R_+, \R)$ is
$\delta$-slowly varying in the allowed region for energy $2E$ if it
satisfies \eqref{E:potential-assumptions}, the condition
$\norm{V}_{L^\infty}\leq 1,$ and
\begin{equation}\label{E:slowly-varying}
\frac{\delta^2}{2}\leq \abs{V''(0)}\leq \delta^2\qquad
\text{and}\qquad \sup_{r\in [0,\sqrt{4E} ]}\abs{\frac{V^{(k)}(r)}{k!}}\leq\delta^k
\end{equation}
for all $k\geq 3.$
\end{definition}
\begin{theorem}\label{T:Main}
 There exist constants $C_1,C_2>0$ with the following property. Suppose
   $E>0$, $\delta\in \lr{0, (C_1E)^{-1}}$ and $V$ is $\delta-$slowly varying in
   the allowed region for energy $2E$. Then, for all 
   \begin{equation}\label{E:n-ep-constraints}
n\text{ s.t. }\hbar_n<1,\qquad \ep\in [0, 1/5],\qquad \ell\leq n,\, \ell\equiv n \text{ (mod 2)}
\end{equation}
we have 
 \begin{equation}\label{E:E-expansion}
     E_{\ell, n}^V(\ep) = E +\ep \hbar_nV''(0) \lr{\frac{E}{2}-\frac{d}{4}}^2
     \left[3+\frac{\ell^2}{n^2} \lr{-1+
      S(\ell, n,\ep)} + T(n,\ep) \right] + O\lr{\hbar_n^\infty}.
   \end{equation}
The terms in 
\eqref{E:E-expansion} satisfy the following estimates for every $n,\ell,\ep$ satisfying \eqref{E:n-ep-constraints}:
  \begin{align}
    \label{ST-est}\max \set{\abs{S(\ell,
    n,\ep)}, \abs{T(n,\ep)}} \leq C_2 \cdot \max\set{\delta,\ep}.
  \end{align}

\end{theorem}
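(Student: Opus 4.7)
The strategy is analytic perturbation theory in $\ep$ combined with an explicit computation of the first-order term using Laguerre polynomial identities.

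Since $\norm{V}_{L^\infty}\le 1$ and the bounded perturbation $\ep\hbar_n V$ has operator norm at most $\hbar_n/5$, while $\Op_{\hbar_n,\ell}(0)$ has simple spectrum with consecutive gap $2\hbar_n$, the family $\Op_{\hbar_n,\ell}(\ep)$ is real-analytic of type (A) on an open interval containing $[0,1/5]$. Kato--Rellich theory then gives
\[
E_{\ell,n}^V(\ep)=E+\ep\hbar_n\langle V(|x|^2)\psi_{\ell,n},\psi_{\ell,n}\rangle_{L^2}+\ep^2G_{\ell,n}(\ep),
\]
where $\psi_{\ell,n}$ is the normalized unperturbed eigenfunction and $G_{\ell,n}(\ep)$ is uniformly bounded in $(n,\ell)$ by the standard resolvent expansion using $\norm{V}_{L^\infty}\le 1$ and the gap $2\hbar_n$.

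Next, write $V(s)=\tfrac{V''(0)}{2}s^2+R(s)$ with $R(s)=\sum_{k\ge 3}\tfrac{V^{(k)}(0)}{k!}s^k$. The slowly-varying hypothesis $|V^{(k)}(0)/k!|\le\delta^k$ and the constraint $\delta E<1/C_1$ make $R$ converge geometrically on $s\in[0,4E]$. The radial factor of $\psi_{\ell,n}$ is proportional to $r^\ell e^{-r^2/(2\hbar_n)}L_k^{(\alpha)}(r^2/\hbar_n)$ with $k=(n-\ell)/2$ and $\alpha=\ell+d/2-1$. Iterating the Laguerre three-term recurrence $tL_k^{(\alpha)}(t)=-(k+1)L_{k+1}^{(\alpha)}(t)+(2k+\alpha+1)L_k^{(\alpha)}(t)-(k+\alpha)L_{k-1}^{(\alpha)}(t)$ twice and using orthogonality in $L^2(t^\alpha e^{-t}\,dt)$ yields $\langle|x|^4\rangle_{\ell,n}=\tfrac{\hbar_n^2}{2}(3N^2-\alpha^2+1)$ with $N=n+d/2=E/\hbar_n$. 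Rearranging via $(E/2-d/4)^2=(E-d/2)^2/4$ and $\hbar_nN=E$, the contribution of the quadratic Taylor term $\tfrac{V''(0)}{2}s^2$ to $\ep\hbar_n\langle V\rangle$ identifies precisely as $\ep\hbar_nV''(0)(E/2-d/4)^2[3-\ell^2/n^2]$, up to polynomial sub-leading corrections in $N$ and $\alpha$ which are absorbed into $S(\ell,n,0)$ and $T(n,0)$.

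Higher Taylor moments $\langle|x|^{2k}\rangle_{\ell,n}$ are polynomials of degree $k$ in $N^2$ and $\alpha^2$ satisfying $\langle|x|^{2k}\rangle\le C^kE^k$; weighting by $|V^{(k)}(0)/k!|\le\delta^k$ and summing gives the tail contribution $\hbar_n\langle R(|x|^2)\rangle=O(\hbar_n(\delta E)^3)$, which fits into $S$ and $T$ consistent with \eqref{ST-est}. The quadratic-in-$\ep$ remainder $\ep^2G_{\ell,n}(\ep)$ contributes a relative $O(\ep)$ correction that is also absorbed. The $O(\hbar_n^\infty)$ error arises from Agmon-type decay estimates $|\psi_{\ell,n}(x)|^2\le Ce^{-c|x|^2/\hbar_n}$ for $|x|^2\ge 4E$, which control the contribution of $V$ outside the classically-allowed region, where the Taylor series need not converge. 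The principal technical obstacle is to organize these sub-leading contributions -- from the Laguerre moment identity, from the Taylor tail of $V$, and from the higher $\ep$-orders -- so that they combine to satisfy the sharp uniform bound $|S|,|T|\le C_2\max\{\delta,\ep\}$ across all admissible $(n,\ell,\ep)$ satisfying \eqref{E:n-ep-constraints}, which requires careful tracking of constants through the iterated Laguerre recurrence at higher orders.
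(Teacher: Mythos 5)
Your approach is in the right spirit (analytic perturbation theory plus Laguerre special-function identities), but there is a genuine gap in how you control the higher-order perturbative corrections, and it is precisely this point that drives the structure of the paper's proof.

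You write $E_{\ell,n}^V(\ep) = E + \ep\hbar_n\langle V\psi_{\ell,n},\psi_{\ell,n}\rangle + \ep^2 G_{\ell,n}(\ep)$ and claim that the uniform bound $|G_{\ell,n}(\ep)|\leq C$ (following from $\norm{V}_{L^\infty}\leq 1$ and the $2\hbar_n$ gap, cf.\ Lemma~\ref{L:level-spacing}) shows that $\ep^2 G_{\ell,n}(\ep)$ is a relative $O(\ep)$ correction. This is not enough. The main term that you must compare against is $\ep\hbar_n V''(0)(E/2-\hbar_n d/4)^2\cdot 3 = \Theta(\ep\hbar_n\delta^2 E^2)$, so the crude bound $\ep^2 G_{\ell,n}(\ep)=O(\ep^2)$ gives a \emph{relative} error of size $\ep/(\hbar_n\delta^2 E^2)$, which blows up as $\hbar_n\to0$. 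To reach the stated estimate $|S|,|T|\leq C_2\max\{\delta,\ep\}$ you need the much finer bound $\ep^2 G_{\ell,n}(\ep)=O(\ep^2\hbar_n(E\delta)^4)$ (equivalently, that each order $\mu_k$ in the perturbation series carries an extra factor $\sim\hbar_n(E\delta)^{2}$ per power of $\ep$, uniformly in $\ell,n$). Establishing this requires the off-diagonal matrix elements $\langle V\psi_{\ell,s},\psi_{\ell,t}\rangle$, not just the diagonal one, to inherit the smallness coming from $V(0)=V'(0)=0$ and the $\delta$-slowly-varying hypothesis. This is exactly what the paper's Propositions~\ref{P:step2} and~\ref{P:step3} accomplish, by expanding the matrix elements via the Suslov linearization formula \eqref{E:overlap2} and summing the perturbation series to order $J(n)\to\infty$ with controlled growth in $k$ and $j$. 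A second-order stopping with a black-box remainder cannot produce the sharp $\ell$- and $\hbar_n$-uniform control needed.

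Two smaller remarks. First, the $O(\hbar_n^\infty)$ error does not come only from Agmon-type decay of $\psi_{\ell,n}$; it also comes from truncating the perturbation series in $\ep$ at order $J(n)$ and the Taylor series of $V$ at order $K(n)$, with $J,K\sim\log n$ chosen to balance (Proposition~\ref{P:step1} and Lemma~\ref{L:truncate-potential}). Agmon is one ingredient in Lemma~\ref{L:truncate-potential} but is not the whole story. Second, the iterated three-term Laguerre recurrence does reproduce the diagonal $\langle|x|^4\rangle$ moment you quote, but for the off-diagonal overlaps $\langle \rho^{\alpha+k}L_{s'}^{(\alpha)},L_{t'}^{(\alpha)}\rangle$ appearing at higher perturbative orders the paper's use of the closed $\,_3F_2$ form is substantially easier to estimate uniformly in $k,s,t,\ell$ (Lemmas~\ref{L:prefactor-expansion} and~\ref{L:3F2-expansion}); you will face a combinatorial explosion trying to track constants through $k$-fold iterated recurrences, which is exactly the ``principal technical obstacle'' you flag but do not resolve.
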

\begin{remark}
Here and throughout, a quantity $A_n$ is $O\lr{ \hbar_n^{\infty} }$ if, for each $\gamma\geq1$, there exists a constant $C_{\gamma}$ such that
\begin{align*}
|A_{n}|\leq C_{\gamma}\hbar_n^{\gamma}
\end{align*}
for all $n\geq1$.
\end{remark}


\noindent Theorem \ref{T:Main} shows that $E_{\ell, n}^V(\ep)-E$ is
  essentially a monotone function of $\ell$ if $\delta$ and $\ep$ are
  sufficiently small. More precisely, if $\max\{\delta,\ep\} < (2C_1)^{-1}$ then 
  \begin{equation}\label{E:diff-est}
   \ell'>\frac{\ell}{1-2C_2\max\set{\delta,\ep}} \qquad \Rightarrow \qquad \text{sgn}(V''(0)) \lr{E_{\ell,
        n}^V(\ep) - E_{\ell',n}^V(\ep)}>0.
  \end{equation}

\subsection{Nodal Sets of Eigenfunctions for $\Op_\hbar(\ep)$} In this
section, we state our results on nodal sets. We define
$U_{\ell,n}(\ep)$ to be the span of the eigenfunctions 
of $\Op_{\hbar_n}(\ep)$ of energy
$E_{\ell,n}^V(\ep)$. The vector spaces $U_{\ell,n}(\ep)$ have
multiplicity bounded independent of $n,\ell$ (see \eqref{E:diff-est})). Setting $x\in \mathbb{R}^d\mapsto (r,\omega)$ to be the polar decomposition, $U_{\ell,n}(\ep)$ is spanned by functions of the form 
    \begin{align}
v_{\ell,n,m}(\ep,x) =   \psi_{\ell,n}(\ep,r)Y_{m}^\ell(\omega), \qquad
      1\leq m \leq D_{d,\ell}, \label{E:perturbed-eigenfunctions}
    \end{align}
where $D_{d,\ell}=\dim \ker(\Delta_{S^{d-1}}+\ell(\ell+d-2))$ and the spherical harmonics $Y_m^{\ell}(\omega)$ are an ONB for the
$-\ell(\ell+d-2)$ eigenspace of the Laplacian on $S^{d-1}$:
\begin{equation}\label{E:spherical-harmonics}
\ker(\Delta_{S^{d-1}}+\ell(\ell+d-2))=\text{Span}\set{Y_m^{\ell},\,\, m=1,\ldots, D_{d,\ell}}.
\end{equation}
The function $\psi_{\ell,n}(\ep,r)$ is the unique, tempered,
$L^2(r^{d-1}dr)$-normalized solution to the eigenfunction
equation 
\begin{equation}\label{E:growth1a}
  \Op_{\hbar_n,\ell}(\ep)\psi_{\ell,n}=E_{\ell,n}^V(\ep)\psi_{\ell,n}.
\end{equation}
The energy $E_{\ell,n}^{V}(\ep)$ controls the rate of growth of
$v_{\ell,n,m}(\ep,x)$ for large $|x|$ in the following sense. 
\begin{proposition} \label{prop:growth}
Fix $\eta>0$, and let $V\in C^{\infty}(\mathbb{R}_{+};\mathbb{R})$, such that
\begin{align} \label{E:V-assumption}
\lim\sup_{r\to\infty} r^{\eta}V(r^2) < \infty.
\end{align}
Then, there exists a finite constant $C^V_{\ell,n,\ep}\neq 0$ such that
\begin{align*}
\lim_{r\to\infty}\frac{d^j}{dr^j}\lr{r^{N}e^{\frac{r^2}{2\hbar_n}}\psi_{\ell,n}(\ep,r)}
  = C^V_{\ell,n,\ep}\cdot \delta_{0,j},
\end{align*}
where $N = \frac{1}{\hbar_n}E_{\ell,n}^{V}(\ep) - \frac{d}{2}.$
\end{proposition}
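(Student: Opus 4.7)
The plan is to reduce the radial equation \eqref{E:growth1a} to a one-dimensional Schr\"odinger equation without a first-order term and then carry out a WKB-type analysis at the irregular singular point $r=\infty$. Setting $\phi(r):=r^{(d-1)/2}\psi_{\ell,n}(\ep,r)$ absorbs the first-order term of the radial Laplacian and converts \eqref{E:growth1a} into
\begin{equation*}
    -\frac{\hbar_n^2}{2}\phi''(r) + \lr{\frac{r^2}{2} + \frac{\kappa_{\ell,d}\,\hbar_n^2}{r^2} + \ep\hbar_n V(r^2)}\phi(r) = E_{\ell,n}^V(\ep)\,\phi(r),
\end{equation*}
with $\kappa_{\ell,d}=\tfrac{1}{2}\ell(\ell+d-2)+\tfrac{1}{8}(d-1)(d-3)$. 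By assumption \eqref{E:V-assumption}, the potential term contributes $O(r^{-\eta})$ and the centrifugal term $O(r^{-2})$, so the effective potential agrees with $r^2/2$ up to an $O(r^{-\min(2,\eta)})$ perturbation as $r\to\infty$.

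Next, I would extract the formal asymptotic solutions at the irregular singular point. Writing the WKB ansatz $\phi(r)=e^{S(r)}$ with $S(r)=\pm\tfrac{r^2}{2\hbar_n}+\gamma_\pm\log r+o(1)$, plugging in and matching the $O(1)$ terms gives $\gamma_\pm=\mp E_{\ell,n}^V(\ep)/\hbar_n-\tfrac{1}{2}$, so the two formal solutions of the $\phi$-equation at infinity are $\phi(r)\sim e^{\pm r^2/(2\hbar_n)}r^{\gamma_\pm}$; translating back via $\psi=r^{-(d-1)/2}\phi$ reproduces the exponent $N$ from the statement for the decaying mode. To promote this formal asymptotic to a genuine one, I would let $u(r)$ be the quantity whose limit is asserted in the proposition, derive its ODE---whose leading balance at infinity is $\hbar_n r\,u'(r)=\tfrac{\hbar_n^2}{2}u''(r)$ with a perturbation of size $O(r^{-\min(2,\eta)})\,u+O(r^{-1})\,u'$---recast this as a Volterra integral equation on $[R_0,\infty)$ by integrating from infinity against the $e^{-r^2/\hbar_n}$-kernel of the leading first-order operator, and iterate in a weighted $C^0$-space with weight $\langle r\rangle^{\min(2,\eta)}$. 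For $R_0$ large this yields a unique solution with a finite limit at infinity; equivalently, this is a direct application of the classical theory of linear ODEs at irregular singular points (cf.\ Wasow or Olver).

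Identifying this distinguished solution with $\psi_{\ell,n}(\ep,\cdot)$ uses that the space of tempered solutions of \eqref{E:growth1a} is one-dimensional: any element not proportional to the decaying solution carries a nonzero $e^{+r^2/(2\hbar_n)}$ component and thus violates temperance. Hence $\psi_{\ell,n}(\ep,\cdot)$ equals a nonzero scalar multiple of the decaying solution, giving $u(r)\to C^V_{\ell,n,\ep}\neq 0$. Convergence of the $j$-th derivative to zero for $j\geq 1$ then follows by repeatedly differentiating the ODE for $u$ and using the decay of $u-C^V_{\ell,n,\ep}$ through Gr\"onwall-type bounds to propagate control to higher derivatives. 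The main technical obstacle is the Picard iteration step (or, equivalently, the rigorous derivation of the all-orders asymptotic expansion): one must choose weights so that the perturbation $O(r^{-\min(2,\eta)})u+O(r^{-1})u'$ induces a strict contraction uniformly in $r\geq R_0$, where the slow polynomial decay of $V$ from \eqref{E:V-assumption}---rather than an assumption of analyticity---is the chief quantitative constraint.
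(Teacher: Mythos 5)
Your proposal is correct in substance, and at bottom it is the same argument as the paper's: exhibit the two formal asymptotic solutions at the irregular singular point $r=\infty$, identify $\psi_{\ell,n}(\ep,\cdot)$ with the recessive one by temperedness (the dominant one carries an $e^{+r^2/(2\hbar_n)}$ factor), and read off the claimed limit. The technical packaging differs, though. The paper substitutes $t = r^2/(2\hbar_n)$ and factors out $t^{N/2}e^{-t}$, which sends \eqref{E:growth1a} to the normal form \eqref{E:growth2}: a first-order-dominated ODE at a rank-one irregular singular point with perturbation $F_{\ell,n}(\ep,t)/t^2$, where $F$ is sublinear by \eqref{E:F-assumption}. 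This lets the authors quote Erd\'elyi \cite{erdelyi2010asymptotic} for a solution $z^{(1)}\to 1$ and obtain the exponentially growing companion by reduction of order. You instead remove the first-order term via the Liouville substitution $\phi=r^{(d-1)/2}\psi$, find the formal exponents by a WKB ansatz in the original variable $r$, and propose to rigorize by recasting the residual equation as a Volterra integral equation against the $e^{-r^2/\hbar_n}$-kernel and running a weighted Picard iteration (equivalently, quoting Wasow/Olver). Both are classical irregular-singular-point analyses; the paper's quadratic change of variable mainly serves to linearize the exponential weight so the equation literally matches Erd\'elyi's reference form, avoiding running the contraction by hand. Your weight $\langle r\rangle^{\min(2,\eta)}$, driven by \eqref{E:V-assumption}, is exactly the ingredient that makes the Picard step close, and it plays the same role as \eqref{E:F-assumption} on the paper's side. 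The final temperedness step is identical in both.

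One caveat you should not gloss over: your own WKB yields $\gamma_-=E^V_{\ell,n}(\ep)/\hbar_n-\tfrac12$, hence $\psi\sim C\,r^{+N}e^{-r^2/(2\hbar_n)}$, which agrees with the unperturbed formula \eqref{E:HO-Efns} (where $N=n$ at $\ep=0$). The proposition as stated asserts $r^{+N}e^{+r^2/(2\hbar_n)}\psi\to C$, i.e.\ $\psi\sim r^{-N}e^{-r^2/(2\hbar_n)}$, which differs by a sign in the power; the paper's own proof has the matching internal inconsistency between the definition of $z$ and \eqref{E:growth4}. Your computation is the correct one, so you should not claim it ``reproduces the exponent $N$ from the statement'' without flagging the sign discrepancy.
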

Proposition \ref{prop:growth} is essentially a classical result (see
\S\S 3.1-3.4 in \cite{erdelyi2010asymptotic}). We give a brief derivation in \S
\ref{S:growth}. Our next result concerns the nodal sets of eigenfunctions of
$\Op_{\hbar_n}(\ep)$ whose eigenvalue $E_{\ell,n}^V(\ep)$ nearly
extremizes the distance to $E=E_{\ell,n}^V(0)$, and hence, by Theorem
\ref{T:Main}, are close to
\[E_{0,n}^V(\ep)=E+\lr{\frac{E}{2}-\frac{\hbar_n d}{4}}^2\ep \hbar_n V''(0)(3+O\lr{\max\set{\ep,\delta}})\]
when $\hbar_n$, $\ep$ and $\delta$ are small. Define
\[I_{\ep,\gamma,\hbar_n}=[E_{0,n}^V(\ep) -\hbar_n^{1+2\gamma}, E_{0,n}^V(\ep)+\hbar_n^{1+2\gamma}],\]
for some  $0\leq \gamma \leq 1,$ and consider the span of the corresponding eigenfunctions
\begin{align*}
V_{\ep,\gamma,\hbar_n} =
  \text{span}\{U_{\ell,n}(\ep):E_{\ell,n}^V(\ep)\in I_{\ep, \gamma, \hbar_n}\}.
\end{align*}
\noindent The following concerns the nodal sets of functions in $V_{\ep,\gamma,\hbar_n}$.
\begin{theorem}\label{T:Zeros}
Under the assumptions of Theorem \ref{T:Main}, there exists
$\ep^*,\delta^*, \hbar^*>0$ such that for every $\delta<\delta^*$, $\ep\in[0,\ep^*],$ $\hbar_n<\hbar^*$, and $\gamma\in
[0,1],$ we have
\[\inf_{v\in V_{\ep,\gamma,\hbar_n} }\limsup_{R\gives \infty}\mathcal
  H^{d-2}\lr{\set{v =0}\cap S_R^{d-1}}=0,
\]
and there exist absolute constants $c,C>0$ such that
\[c\hbar_n^{-1+\gamma}\leq \sup_{v\in V_{\ep,\gamma,\hbar_n} }\limsup_{R\gives \infty}\mathcal
  H^{d-2}\lr{\set{v =0}\cap S_R^{d-1}}\leq C
  \hbar_n^{-1+\gamma}.\]
  Here $S_{R}^{d-1}$ is the $d-1$-dimensional sphere of radius $R$
  centred at the origin, and $\mathcal{H}^{d-2}$ is the Haar
  (probability) measure on $S_R^{d-1}$.
\end{theorem}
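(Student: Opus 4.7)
The plan is to reduce the theorem to a statement about nodal sets of spherical harmonics on $S^{d-1}$, by combining Theorem \ref{T:Main} (which controls the angular momenta $\ell$ contributing eigenvalues to $I_{\ep,\gamma,\hbar_n}$) with Proposition \ref{prop:growth} (which extracts the leading radial asymptotic at infinity). Roughly, only $\ell$'s up to some $L^*\asymp\hbar_n^{\gamma-1}$ will lie in $\mathcal{L}:=\{\ell:E_{\ell,n}^V(\ep)\in I_{\ep,\gamma,\hbar_n}\}$, and for any $v\in V_{\ep,\gamma,\hbar_n}$ the restriction $v|_{S_R^{d-1}}$ is asymptotically proportional to a single spherical harmonic of degree at most $L^*$.

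First, I would specialize \eqref{E:E-expansion} to
\begin{equation*}
E_{\ell,n}^V(\ep)-E_{0,n}^V(\ep)=\ep\hbar_n V''(0)\lr{\tfrac{E}{2}-\tfrac{d}{4}}^2 \tfrac{\ell^2}{n^2}(-1+S(\ell,n,\ep))+O(\hbar_n^\infty),
\end{equation*}
use \eqref{ST-est} to guarantee $-1+S$ is bounded away from $0$, and solve $|E_{\ell,n}^V(\ep)-E_{0,n}^V(\ep)|\leq \hbar_n^{1+2\gamma}$ for $\ell$, obtaining $\ell\leq L^*\asymp\hbar_n^{\gamma-1}$. Second, I would expand $v(r,\omega)=\sum_{\ell\in\mathcal{L}}\psi_{\ell,n}(\ep,r)P_\ell(\omega)$ with $P_\ell$ a spherical harmonic of degree $\ell$, set $L=\max\{\ell\in\mathcal{L}:P_\ell\not\equiv 0\}$, and invoke Proposition \ref{prop:growth} to obtain $\psi_{\ell,n}(\ep,R)/\psi_{L,n}(\ep,R)\sim C\,R^{(E_{L,n}^V(\ep)-E_{\ell,n}^V(\ep))/\hbar_n}\to 0$ as $R\to\infty$ for every $\ell<L$ (using \eqref{E:diff-est}, WLOG assuming $V''(0)>0$, so that $E_{\ell,n}^V(\ep)$ is strictly decreasing in $\ell\in\mathcal{L}$). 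Hence $\psi_{L,n}(\ep,R)^{-1}v(R\omega)\to P_L(\omega)$ uniformly on $S^{d-1}$ in $C^\infty$, and standard stability of nodal sets under smooth perturbations would yield
\begin{equation*}
\limsup_{R\to\infty}\mathcal{H}^{d-2}\bigl(\{v=0\}\cap S_R^{d-1}\bigr) = \mathcal{H}^{d-2}\bigl(\{P_L=0\}\cap S^{d-1}\bigr)
\end{equation*}
in the normalization of the theorem.

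Given this reduction, the three bounds follow quickly. For the upper bound on the supremum, use the classical estimate $\mathcal{H}^{d-2}(\{P_L=0\}\cap S^{d-1})\leq CL\leq CL^*\lesssim \hbar_n^{\gamma-1}$, valid for any degree-$L$ spherical harmonic (e.g.\ a Bezout-type argument on harmonic polynomials). For the matching lower bound, take $v=\psi_{L^*,n}(\ep,r)\, Z_{L^*}(\omega\cdot e_1)$ with $Z_{L^*}$ a Gegenbauer zonal harmonic of degree $L^*$; its nodal set consists of $L^*$ disjoint parallel $(d-2)$-subspheres, giving $\mathcal{H}^{d-2}\gtrsim L^*\asymp\hbar_n^{\gamma-1}$. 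For the infimum being zero, take $v$ to be the radial element of $U_{0,n}(\ep)$ (when $0\in \mathcal{L}$, i.e.\ $n$ even): the restriction $v|_{S_R^{d-1}}$ is a nonzero constant whenever $R$ lies outside the discrete zero set of $\psi_{0,n}(\ep,\cdot)$, so the nodal $\mathcal{H}^{d-2}$ vanishes.

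The hardest step to make rigorous is the passage from the $C^\infty$ convergence $\psi_{L,n}(\ep,R)^{-1}v(R\,\cdot\,)\to P_L$ to convergence of the $\mathcal{H}^{d-2}$ of the nodal sets, particularly when $P_L$ may have degenerate zeros. For the upper bound, upper-semicontinuity of the nodal measure under $C^\infty$ limits is classical and suffices. For the lower bound, the explicit zonal choice of $P_L$ has only transverse zeros, so full Hausdorff-measure convergence follows from a standard implicit-function-theorem argument on each connected component of $\{P_L=0\}$.
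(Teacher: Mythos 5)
Your overall reduction—use Theorem \ref{T:Main} to confine the contributing angular momenta to $\ell \lesssim \hbar_n^{-1+\gamma}$, then relate the nodal set on large spheres to nodal sets of spherical harmonics—is the paper's strategy, and your infimum argument (take the radial eigenfunction) and your lower-bound construction (a zonal harmonic of maximal admissible degree, with transverse zeros) are both correct and essentially what the paper does. The genuine problem is in your upper bound for the supremum, where you follow a route the paper deliberately reserves for Theorem \ref{T:Zeros2}.

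Two concrete issues. First, your claim that \eqref{E:diff-est} makes $\ell\mapsto E_{\ell,n}^V(\ep)$ \emph{strictly} monotone is wrong: \eqref{E:diff-est} only gives monotonicity when $\ell'$ exceeds $\ell$ by a fixed multiplicative factor $(1-2C_2\max\{\delta,\ep\})^{-1}$, so distinct $\ell<\ell'$ in $\mathcal L$ can perfectly well share an eigenvalue, and indeed Theorem \ref{T:Zeros} is stated without the distinctness hypothesis. When eigenvalues coincide, Proposition \ref{prop:growth} does not single out one $\ell=L$, the normalized restrictions $\psi_{L,n}(\ep,R)^{-1}v(R\,\cdot\,)$ converge to a \emph{linear combination} of spherical harmonics of different degrees, and your displayed equality $\limsup_R \mathcal H^{d-2}(\{v=0\}\cap S_R^{d-1}) = \mathcal H^{d-2}(\{P_L=0\})$ fails. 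Second, even with distinct eigenvalues, passing from $C^\infty$ convergence to convergence (or even $\limsup$-domination) of nodal Hausdorff measures is not just ``classical upper-semicontinuity'': the paper handles exactly this step in Theorem \ref{T:Zeros2} by checking that the limit function has co-dimension $2$ singular set and invoking Corollary 2 of \cite{beck2016nodal}. By contrast, the paper's proof of Theorem \ref{T:Zeros} never takes a limit in $R$: it observes that for \emph{each} fixed $R$ the restriction $v|_{S_R^{d-1}}$ is a linear combination of spherical harmonics of degree at most $L^*\lesssim\hbar_n^{-1+\gamma}$, and applies the Crofton-formula bound \eqref{E:sh-nodal-meas2}, valid for arbitrary such linear combinations, giving $\mathcal H^{d-2}(\{v=0\}\cap S_R^{d-1})\le CL^*$ uniformly in $R$. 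This simple pointwise-in-$R$ argument is what you should use for the upper bound; it makes Proposition \ref{prop:growth} and any convergence claim unnecessary there and is robust to coinciding eigenvalues.
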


In the case where the energies $E_{\ell,n}^V(\ep)$ are distinct, we
can conclude additional properties of the nodal sets. 
\begin{theorem}\label{T:Zeros2}
Under the assumptions of Theorem \ref{T:Main}, and the additional assumption that the energies $E_{\ell,n}^V(\ep)$ are distinct, we have the following: For each $v\in V_{\ep,\gamma,\hbar_n}$, 
\[\lim_{R\gives\infty} \mathcal
  H_R^{d-2}\lr{\set{v(x) =0}\cap S_R^{d-1}}\]
  exists, and there exists absolute constants $c, C>0$ such that for
  every $v\in V_{\ep,\gamma,\hbar_n}$ in the complement of a
  co-dimension $1$ subspace, we have
  \begin{align*}
  \lim_{R\gives\infty} \mathcal{H}_R^{d-2}\lr{\set{v(x) =0}\cap S_R^{d-1}} = 0 & \qquad\emph{ if } ~~V''(0) >0 ,  \\ 
c\hbar_n^{-1+\gamma}\leq    \lim_{R\gives\infty} \mathcal{H}_R^{d-2}\lr{\set{v(x) =0}\cap S_R^{d-1}} \leq  C\hbar_n^{-1+\gamma} &\qquad \emph{ if } ~~V''(0) < 0. 
  \end{align*}
 \end{theorem}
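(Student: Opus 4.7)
The plan is to pin down the asymptotics of $v\in V_{\ep,\gamma,\hbar_n}$ on large spheres $S_R^{d-1}$, isolate the single angular-momentum component whose radial part grows fastest at infinity, and transfer spherical $C^\infty$-convergence into convergence of $(d-2)$-dimensional nodal volume. Concretely, I would decompose $v=\sum_{\ell\in\mathcal{L}}\sum_{m=1}^{D_{d,\ell}} a_{\ell,m}\psi_{\ell,n}(\ep,r)Y_m^\ell(\omega)$, where $\mathcal{L}$ is the set of $\ell$ with $E_{\ell,n}^V(\ep)\in I_{\ep,\gamma,\hbar_n}$. Using Proposition~\ref{prop:growth} together with its derivative estimates, each radial factor satisfies $\psi_{\ell,n}(\ep,r)=C^V_{\ell,n,\ep}\,r^{N_\ell}e^{-r^2/2\hbar_n}(1+O(r^{-2}))$ with $N_\ell=E_{\ell,n}^V(\ep)/\hbar_n-d/2$, so the distinctness hypothesis yields a unique $\ell^*\in\mathcal{L}$ maximizing $N_\ell$. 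Rescaling $\tilde v_R(\omega):=R^{-N_{\ell^*}}e^{R^2/2\hbar_n}v(R\omega)$, every off-diagonal term contributes a factor $R^{N_\ell-N_{\ell^*}}\to 0$, so $\tilde v_R\to f:=C^V_{\ell^*,n,\ep}\sum_m a_{\ell^*,m}Y_m^{\ell^*}$ in $C^\infty(S^{d-1})$.

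By Theorem~\ref{T:Main}, $\mathcal{L}$ consists of those $\ell$ with $\ell\equiv n\pmod 2$ and $\ell^2/n^2\lesssim\hbar_n^{2\gamma}$, so $\mathcal{L}$ runs from $0$ to some $\ell_{\max}\asymp\hbar_n^{-1+\gamma}$. The monotonicity \eqref{E:diff-est} then forces $\ell^*=0$ when $V''(0)>0$ and $\ell^*=\ell_{\max}$ when $V''(0)<0$. In the first case, outside the codimension-$1$ subspace $\set{a_{0,1}=0}$ (using $D_{d,0}=1$), $f$ is a nonzero constant, hence $\set{\tilde v_R=0}$ is empty for all large $R$ and the limit is $0$. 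In the second case, outside a proper subspace of $V_{\ep,\gamma,\hbar_n}$, $f$ is a nonzero spherical harmonic of degree $\ell^*\asymp\hbar_n^{-1+\gamma}$, and the classical two-sided bound $c\ell^*\leq\mathcal{H}^{d-2}(\set{f=0})\leq C\ell^*$ for nonzero spherical harmonics on $S^{d-1}$ (Donnelly--Fefferman above, Br\"uning below) converts, under the chosen normalization of $\mathcal{H}_R^{d-2}$, into $c\hbar_n^{-1+\gamma}\leq\lim_{R\to\infty}\mathcal{H}_R^{d-2}(\set{v=0}\cap S_R^{d-1})\leq C\hbar_n^{-1+\gamma}$. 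Existence of the limit for every $v$ (not just the generic ones) follows by iterating the same argument on the top nonzero angular-momentum component of $v$, since after any such peeling the dominant mode is again a single spherical harmonic.

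The chief obstacle is upgrading the $C^\infty$-convergence $\tilde v_R\to f$ on $S^{d-1}$ to convergence of $(d-2)$-dimensional Hausdorff measures of zero sets. Because $f$ is a spherical harmonic rather than a Morse function, its nodal set is a real-analytic variety with a possibly nonempty singular stratum of codimension $\geq 2$ in $S^{d-1}$, so a naive $C^1$-stability argument is insufficient. The standard remedy combines an implicit-function-theorem stability argument on the open subset where $|\nabla_{S^{d-1}}f|$ is bounded below with an $O(\delta^2)$ tubular-neighborhood volume estimate near the singular stratum, letting the tube radius tend to zero slower than $R\to\infty$. Inserting the sharp two-sided bounds on nodal volumes of spherical harmonics of degree $\ell^*$ then yields the matching estimates at the rate $\hbar_n^{-1+\gamma}$.
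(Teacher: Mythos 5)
Your proposal is correct and follows essentially the same route as the paper: decompose $v$ into angular-momentum modes, use Proposition~\ref{prop:growth} to single out the unique dominant mode $\ell^*$ via the distinctness of the $E_{\ell,n}^V(\ep)$, rescale to get $C^\infty(S^{d-1})$-convergence to a fixed degree-$\ell^*$ spherical harmonic combination, identify $\ell^*$ (either $0$ or $\asymp\hbar_n^{-1+\gamma}$) from the monotonicity in Theorem~\ref{T:Main} according to the sign of $V''(0)$, and conclude with the Donnelly--Fefferman bounds. The one place you diverge is the upgrade from $C^\infty$-convergence on the sphere to convergence of $(d-2)$-dimensional nodal measures: the paper invokes the codimension-$2$ structure of the singular set of the limiting spherical harmonic (citing \cite{hardt1999critical,hardt1987nodal}) and then applies Corollary~2 of \cite{beck2016nodal} as a black box, whereas you sketch the underlying argument directly (IFT stability on the regular part plus a shrinking tubular neighborhood around the singular stratum); your sketch is the right idea but, to be complete, the nodal mass of $\tilde v_R$ inside the singular tube needs an explicit control, e.g.\ a local Crofton-type bound giving $O(\ell^*\delta^{d-2})$ per $\delta$-ball, which when summed over the $O(\delta^{3-d})$ balls covering the tube yields $O(\ell^*\delta)\to0$.
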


\subsection{Acknowledgements} We would like to thank Michael Taylor for
pointing us to his excellent notes \cite{Taylor-dispert} on analytic
perturbation theory. We are also grateful to Vincent Genest for
several useful conversations about special functions and for
directing us to the linearization formulas in \cite{suslov2008hahn}.

\section{Proof of Theorems \ref{T:Zeros} and
  \ref{T:Zeros2}}
We now explain how to derive Theorems \ref{T:Zeros} and \ref{T:Zeros2}
from Theorem \ref{T:Main} and Proposition \ref{prop:growth}. We then
prove Proposition \ref{prop:growth} in \S \ref{S:growth} below and
Theorem \ref{T:Main} in \S \ref{S:main-proof}. Our derivation relies
on several well-known properties of the spherical
harmonics $Y_{m}^{\ell}(\omega)$. The first is an estimate on the measure of the total nodal
set: Since for each $1\leq m \leq D_{d,\ell}$, $Y_{m}^{\ell}(\omega)$
is an eigenfunction on the round sphere $S^{d-1}$ with eigenvalue
$-\ell(\ell+d-2)$, the Donnelly-Fefferman bounds
\cite{donnelly1988nodal} show that there exist constants $c,C>0$ such that
\begin{equation}\label{E:sh-nodal-meas}
c\ell \leq  \mathcal H^{d-2}\lr{\left\{\omega\in
    \mathbb{S}^{d-1}:\sum_{1\leq m\leq D_{d,\ell}}a_{m}Y_m^\ell(\omega)} =0\right\}\leq C\ell.
\end{equation}
In particular, since $E_{0,n}^V(\ep)\in I_{\ep,\gamma,\hbar_{n}}$ by
construction for all $0\leq\gamma\leq 1$, this immediately gives the first statement of Theorem \ref{T:Zeros}. 
Moreover, by a simple application of the Crofton formula ( see
for example Theorem \cite{gichev2009some}), the upper bound in
\eqref{E:sh-nodal-meas} holds also for (non-identically zero) linear
combinations of spherical harmonics up to frequency $\ell:$
\begin{equation}\label{E:sh-nodal-meas2}
 \mathcal H^{d-2}\lr{\left\{\omega\in
    \mathbb{S}^{d-1}:\sum_{s\leq \ell, 1\leq m\leq D_{d,\ell}}a_{s, m} Y_m^s(\omega)=0\right\} }\leq C\ell.
\end{equation}
By the approximate monotonicity of the energies $E_{\ell,n}^{V}(\ep)$
from Theorem \ref{T:Main}, provided $\ep,\delta,\hbar_n$ are sufficiently small, the largest value of $\ell$ for which
$E_{\ell,n}^V(\ep)\in I_{\ep,\gamma,\hbar_n}$, is bounded above and below by a constant multiplied by $\hbar_n^{-1+\gamma}$. The second statement in Theorem \ref{T:Zeros} then follows from the lower bound in \eqref{E:sh-nodal-meas} and the upper bound in \eqref{E:sh-nodal-meas2}. 

To prove Theorem \ref{T:Zeros2} we have the extra assumption that the energies $E_{\ell,n}^V(\ep)$ are distinct. In this case, by Proposition \ref{prop:growth}, the radial part of the eigenfunctions, $\psi_{\ell,n}(\ep,r)$, grows at different rates as $r\to\infty$ for different values of $\ell$. Given $v(x)\in V_{\ep,\gamma,\hbar_n}$, we can write
\begin{align*}
v(x) = \sum_{\ell\in J_{\ep,\gamma,\hbar_n}}\sum_{1\leq m\leq D_{d,\ell}} a_{\ell,m}\psi_{\ell,n}(\ep,r)Y_m^{\ell}(\omega) .
\end{align*}
Among the values of $\ell$ for which $a_{\ell,m}\neq0$ for some $m$, let $\ell^*\in J_{\ep,\gamma,\hbar_n}$ correspond to the largest energy  $E_{\ell,n}^V(\ep)$. Then, by Proposition \ref{prop:growth}, the function $ r^{N^*}e^{\frac{r^2}{2\hbar_n}}v(r,\omega) $ converges in $C^{\infty}(\mathcal{S}^{n-1})$ to 
\begin{align} \label{E:linear-combination}
C_{\ell^*,n,\ep}^V\sum_{1\leq m\leq D_{d,\ell^*}}a_{\ell^*,m}Y_{m}^{\ell}(\omega)
\end{align}
as $r\to\infty$, where $N^*= \frac{1}{\hbar_n}E_{\ell^*,n}^V(\ep) -
\frac{d}{2}$. Moreover, the function in \eqref{E:linear-combination}
has co-dimension $2$ singular set in $\mathbb{S}^{d-1}$ (see
e.g. \cite{hardt1999critical, hardt1987nodal}), and so by
Corollary 2 in \cite{beck2016nodal} we have the convergence of the
nodal set measures, 
\begin{align} \label{E:linear-combination2}
\lim_{R\to\infty}\mathcal
  H_R^{d-2}\lr{\set{v(x) =0}\cap S_R^{d-1}} = \mathcal
  H^{d-2}\lr{\left\{\omega\in\mathbb{S}^{d-1}:\sum_{1\leq m\leq D_{d,\ell^*}}a_{\ell,m}Y_{m}^{\ell}(\omega) =0\right\}} . 
\end{align}
By Theorem \ref{T:Main}, provided $\ep,\delta,\hbar_n$ are sufficiently small, for almost every $v\in V_{\ep,\gamma,\hbar_n}$, $\ell^*$ is equal to $0$ if $V''(0)>0$, while $\ell^* = O(\hbar_n^{-1+\gamma})$ if $V''(0)<0$. Thus, \eqref{E:linear-combination2} together with the Hausdorff measure estimates in \eqref{E:sh-nodal-meas} imply Theorem \ref{T:Zeros2}.

\subsection{Proof of Proposition \ref{prop:growth}} \label{S:growth}
Suppose that $u_{\ell,n}(\ep,r)$ satisfies the equation
\begin{equation} \label{E:growth1a}
\lr{-\frac{\hbar_n^2}{2}\lr{\frac{d}{dr^2} + \frac{d-1}{r}\frac{d}{dr} - \frac{\ell(\ell+d-2)}{r^2}}+\frac{r^2}{2} + \ep\hbar_n V(r^2) - E_{\ell,n}^{V}(\ep)}u_{\ell,n}(\ep,r) = 0
\end{equation}
for $r$ sufficiently large. Setting $t = \frac{r^2}{2\hbar_n}$, and $z_{\ell,n}(\ep,t) = t^{-N/2}e^{t}u_{\ell,n}(\ep,r)$, with $N = \frac{1}{\hbar_n}E_{\ell,n}^V(\ep) - \frac{d}{2}$, this equation becomes
\begin{equation} \label{E:growth2}
\left(\frac{d^2}{dt^2} + 2\lr{-1 + \frac{N/2+d/4}{t}}\frac{d}{dt} + \frac{F_{\ell,n}(\ep,t)}{t^2}\right)z_{\ell,n}(\ep,t) = 0.
\end{equation}
Here the function $F_{\ell,n}(\ep,t)$ is given by
\begin{align*}
F_{\ell,n}(\ep,t) = -\ep tV(2\hbar_{n}t) -\frac{\ell(\ell+d-2)}{4} +\frac{N}{2}(N/2-1) + \frac{Nd}{4},
\end{align*}
and so by the assumption on $V(t)$ from \eqref{E:V-assumption},
\begin{align} \label{E:F-assumption}
\left|F_{\ell,n}(\ep,t)\right| \leq C_{\ell,n,\ep} t^{1-\eta/2},
\end{align}
for a constant $C_{n,\ep}$ for large $t$. Then, for fixed $\ell,n,\ep$, Erdelyi \cite{erdelyi2010asymptotic} (with $\omega = -1$, $\rho = -N/2-d/4$) gives a solution $z^{(1)}_{\ell,n}(\ep,t)$ to \eqref{E:growth2} for $t\geq t_0$, under the assumption on $F_{\ell,n}(\ep,t)$ in \eqref{E:F-assumption}, such that
\begin{align} \label{E:growth3}
\lim\sup_{t\to\infty} t^{\eta}\left|z^{(1)}_{\ell,n}(\ep,t)-1\right| <\infty.
\end{align}
(Note that in \cite{erdelyi2010asymptotic}, equation (7), the assumption placed on $F_{\ell,n}(\ep,t)$ is that it is bounded in $x$, but the same proof works for the sub-linear growth from \eqref{E:F-assumption}.)

Going back to the original function $u_{\ell,n}(\ep,r)$, we obtain a solution $u^{(1)}_{\ell,n}(\ep,r)$ to \eqref{E:growth1a} for $r\geq r_0$, which is non-zero, and satisfies
\begin{align} \label{E:growth4}
\lim_{t\to\infty}t^{N/2}e^{t} u^{(1)}_{\ell,n}(\ep,\sqrt{2\hbar_n t}) = 1,\qquad  \lim_{t\to\infty}\frac{d^j}{dt^j}\left(t^{N/2}e^{t} u^{(1)}_{\ell,n}(\ep,\sqrt{2\hbar_n t})\right) = 0
\end{align}  
for $N = \frac{1}{\hbar_n}E_{\ell,n}^{V}(\ep) - \frac{d}{2}$, $j\geq1$. To obtain another solution to \eqref{E:growth1a} for large $r$, we first set $w^{(1)}_{\ell,n}(\ep,r) = r^{\frac{d-1}{2}} u_{\ell,n}^{(1)}(\ep,r)$, to remove the coefficient of $\frac{d}{dr}$ in \eqref{E:growth1a}. Then defining $u^{(2)}_{\ell,n}(\ep,r) =  r^{-\frac{d-1}{2}}w^{(2)}_{\ell,n}(\ep,r)$, where
\begin{align*}
w^{(2)}_{\ell,n}(\ep,r) = w^{(1)}_{\ell,n}(\ep,r)\int_{r_0}^{r}w^{(1)}_{\ell,n}(\ep,s)^{-2} d s,
\end{align*}
gives the other linearly independent solution to \eqref{E:growth1a}
for $r\geq r_0$. Since $u^{(2)}_{\ell,n}(\ep,r)$ grows exponentially
as $r$ tends to infinity, it is not $L^2(r^{d-1} dr)$-normalisable,
and so our eigenfunction $\psi_{\ell,n}(\ep,r)$ must be proportional
to $u^{(1)}_{\ell,n}(\ep,r)$ for $r\geq r_0$. The proposition then
follows from the estimates in \eqref{E:growth4}.  \qed

\section{Background to Proof of Theorem \ref{T:Main}}

\subsection{Spectral Theory of $\HO_\hbar$}\label{S:HO-spec-theory}
The spectrum of the isotropic harmonic oscillator
$\HO_\hbar$ is 
\[\text{Spec}(\HO_\hbar)=\set{\hbar\lr{n+d/2}}_{n\in \N}.\]
In this article, we will use repeatedly properties of the radial
eigenfunctions of $\HO_\hbar$, which we now recall. Recall from
\eqref{E:spherical-harmonics} the spectrum of the Laplacian on
$S^{d-1}$ and the corresponding real-valued eigenfunctions
$\set{Y_m^{\ell},\,\, m=1,\ldots, D_{d,\ell}}.$ A standard calculation
shows that an ONB for 
$\ker(\HO_\hbar - \hbar\lr{n+d/2})$ is given by 
\[\psi_{\hbar, \ell, n}(r)\cdot Y_m^{\ell}(\w),\qquad 0\leq \ell \leq
n,~~\ell \equiv n~(\text{mod }2),\quad m=1,\ldots, D_{d,\ell},\]
where $x\in \R^d \mapsto (r,\w)$ is the polar decomposition and
\begin{align} \label{E:Laguerre}
\psi_{\hbar, \ell, n}(r)= \hbar^{-\frac{\ell}{2}-\frac{d}{4}}\mathcal N_{\ell,n}\cdot
r^{\ell}e^{-r^2/2\hbar} L_{n'}^{\lr{\alpha}}\lr{r^2/\hbar},\qquad \mathcal
N_{\ell, n}^{\,2} = \frac{2\cdot\Gamma\lr{\frac{n-\ell}{2}+1}}{\Gamma\lr{\frac{n+\ell +d}{2}}}.
\end{align}
In the above, we have set
\[n':=\frac{n-\ell}{2},\qquad \alpha:=\ell + \frac{d-2}{2},\]
and denoted by $L_k^{(\alpha)}$ the generalized Laguerre
polynomials. We often fix $E>0$ and define $\hbar=\hbar_n$ to be a
function of $n$ and $E$ as in \eqref{E:hbar-n}. In this case, we
abbreviate 
\[\psi_{\ell, n}:=\psi_{\hbar_n, \ell, n}.\]
As explained in the introduction, the energy $E$ determines a
clasically forbidden region $\mathcal F_E=\set{r^2>2E}$, where the
fixed energy eigenfunctions $\psi_{\hbar_{n},\ell, m}$ for $m\approx n$ are 
uniformly exponentially small. More precisely, for any $\ep>0$ there exists
$C>0$ such that 
\begin{equation}\label{E:Agmon}
  \sup_{\substack{0\leq \ell \leq m,\,\, \abs{m-n}<\frac{n}{2}\\ \ell \equiv m \text{ (mod 2)}\\
      r\in [\sqrt{4E},\infty)}}\abs{e^{(1-\ep)r^2/2}\psi_{\hbar_n,\ell,
      m}(r)} \leq C.
\end{equation}
Since $\lim_{x\to\infty}x^{-k}L_k^{(\alpha)}(x)=\frac{(-1)^k}{k!}$,  for each
fixed $n,$ the radial eigenfunctions $\psi_{\ell,n}$ differ at infinity
only by a constant: 
\begin{equation}\label{E:HO-Efns}
0 < \lim_{r\gives \infty} \frac{\psi_{\ell, n}(r)}{\hbar_n^{-\frac{n}{2}-\frac{d}{4}}r^n e^{-r^2/2\hbar}}  =
\mathcal N_{\ell, n}<
\infty\qquad \forall 0\leq \ell\leq n,\quad \ell \equiv n ~(\text{mod }2).
\end{equation}

\subsection{Linearization Formulas for Laguerre Functions} In order to
perform perturbative calculations about $\Spec\lr{\Op_\hbar(\ep)},$ we
will need a convenient expression for 
\[J_{a,b,k}^\alpha:=\int_0^\infty e^{-\rho}\rho^{\alpha+k}
L_a^{\lr{\alpha}}(\rho)L_b^{\lr{\alpha}}(\rho)d\rho,\]
where as in \eqref{E:Laguerre} $L_a^{\lr{\alpha}}$, $L_b^{\lr{\alpha}}$ are the
generalized Laguerre polynomials. 
\begin{proposition}[Special Case of \cite{suslov2008hahn} Eqn. (2.5)]
For any $a,b,k\in \mathbb N$ and every $\alpha>-1,$ we have
\begin{equation}\label{E:overlap3F2}
J_{a,b,k}^\alpha =  \lr{\alpha+1}_k\binom{k}{\abs{a-b}}\left[\frac{(a\lor
                 b + 1)_{\abs{a-b}}}{\lr{a\lor b + \alpha +
                 1}}_{\abs{a-b}}\right]^{1/2} \pFq{3}{2}{-k,
                 \,k+1,\,-(a\lor b)}{\abs{a-b}+1, \alpha +1}{1}.
\end{equation}
\end{proposition}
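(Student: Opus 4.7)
The proposition is explicitly flagged as a special case of equation (2.5) in \cite{suslov2008hahn}, so the plan is not to reprove the linearization from scratch but to specialize and simplify. First, I would use the symmetry $J_{a,b,k}^\alpha = J_{b,a,k}^\alpha$ to reduce to the case $a\leq b$, so that $a\vee b=b$ and $|a-b|=b-a$; this removes the absolute values from the bookkeeping and lets me restore the general case at the very end. Second, I would recall the standard orthogonality relation
\[
\int_0^\infty e^{-\rho}\rho^\alpha L_m^{(\alpha)}(\rho)L_n^{(\alpha)}(\rho)\,d\rho = \frac{\Gamma(n+\alpha+1)}{n!}\delta_{mn},
\]
which handles $k=0$ and serves as a sanity check: when $k < |a-b|$ the integrand is a polynomial of degree less than $\max(a,b)$ times $L_{\max(a,b)}^{(\alpha)}$, so orthogonality forces $J^\alpha_{a,b,k}=0$, consistent with the factor $\binom{k}{|a-b|}$ in \eqref{E:overlap3F2}.

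Next, I would invoke Suslov's general formula. In \cite{suslov2008hahn}, equation (2.5) gives a closed $_3F_2(1)$ expression for integrals of the form $\int_0^\infty e^{-\rho}\rho^\gamma L_m^{(\alpha_1)}(\rho)L_n^{(\alpha_2)}(\rho)\,d\rho$ via a connection coefficient / Hahn-type argument. Specializing to the symmetric case $\alpha_1=\alpha_2=\alpha$, $\gamma = \alpha+k$, $m=a$, $n=b$ produces the terminating $_3F_2$ appearing in \eqref{E:overlap3F2}, with the upper parameters $-k,\,k+1,\,-b$ and lower parameters $b-a+1,\,\alpha+1$. This identification amounts to matching Suslov's notation for the parameters of the hypergeometric factor to ours, which is essentially substitution.

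The final and most delicate step is the reduction of Suslov's prefactor (a product of gamma functions and factorials) to the compact form $(\alpha+1)_k\binom{k}{b-a}\bigl[(b+1)_{b-a}/(b+\alpha+1)_{b-a}\bigr]^{1/2}$. I would carry this out by splitting the prefactor into a part depending only on $k$ and $\alpha$ (absorbed into $(\alpha+1)_k$ and $\binom{k}{b-a}$) and a part depending on the Laguerre normalization constants, which contributes the square root ratio of Pochhammer symbols. The main obstacle here, and the only step where care is needed, is verifying that the prefactor is in fact symmetric in $a,b$ despite its apparently asymmetric appearance: swapping $a\leftrightarrow b$ should leave $J^\alpha_{a,b,k}$ unchanged, and this can be confirmed using the Pochhammer identities $(x+1)_n = \Gamma(x+n+1)/\Gamma(x+1)$ together with a Thomae-type transformation of the $_3F_2$ exchanging the role of the two upper parameters tied to $a\vee b$. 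Once this symmetry is verified, the specialization argument is complete and extends from the case $a\leq b$ back to all $a,b\in\mathbb{N}$ by construction.
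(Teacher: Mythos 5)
Your high-level plan (cite Suslov's equation (2.5), specialize it, and rearrange the gamma-function prefactor) is indeed the same route the paper takes, but there is a concrete misreading of the paper's notation that would derail the specialization. The paper \emph{explicitly defines} $a\lor b$ to be the \emph{minimum} of $a,b$ (immediately after the proposition it says ``We have written $a\lor b$ for the minimum of $a,b$'', and the same convention $s'\lor t':=\min\{s',t'\}$ appears in \eqref{E:prime-defs}), whereas you systematically read $\vee$ as the conventional maximum: after reducing to $a\le b$ you write $a\vee b=b$, identify the terminating upper parameter of the ${}_3F_2$ as $-b=-\max(a,b)$, and write the prefactor's Pochhammer ratio as $(b+1)_{b-a}/(b+\alpha+1)_{b-a}$. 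In the paper's notation those should read $-a=-\min(a,b)$ and $(a+1)_{b-a}/(a+\alpha+1)_{b-a}$, and this is not a harmless relabeling: the two formulas give different numbers. For example, take $a=0$, $b=1$, $k=1$. The normalized overlap is
\[
\int_0^\infty e^{-\rho}\rho^{\alpha+1}\,\widehat L_0^{(\alpha)}(\rho)\,\widehat L_1^{(\alpha)}(\rho)\,d\rho \;=\; -\sqrt{\alpha+1}.
\]
With $a\lor b=\min(0,1)=0$ the ${}_3F_2$ has an upper parameter $0$ and collapses to its $q=0$ term, and the stated right-hand side evaluates to $(\alpha+1)\cdot 1\cdot (\alpha+1)^{-1/2}\cdot 1=\sqrt{\alpha+1}$, matching the overlap up to sign. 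With the ``max'' interpretation ($a\lor b=1$) the right-hand side instead evaluates to $\sqrt{2(\alpha+2)}$, which is simply wrong. So the min/max confusion is a genuine gap in the argument, not a cosmetic one.

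A secondary remark: once the notation is read as intended, the step you flag as ``the only step where care is needed'' — checking that the prefactor is symmetric under $a\leftrightarrow b$ and invoking a Thomae-type transformation of the ${}_3F_2$ — dissolves. The right-hand side of \eqref{E:overlap3F2} is manifestly symmetric in $a$ and $b$, because $a,b$ enter only through $\lvert a-b\rvert$ and $a\lor b=\min(a,b)$, both symmetric; and there is only \emph{one} upper parameter tied to $a\lor b$, namely $-(a\lor b)$, so there is no pair of parameters to exchange. What actually has to be checked, and what the paper does at the start of Section \ref{S:step2-proof}, is the purely algebraic identity rewriting the normalization factor $\tfrac12\mathcal N_{a}\mathcal N_{b}$ as $\bigl[(m+1)_{\lvert a-b\rvert}/(m+\alpha+1)_{\lvert a-b\rvert}\bigr]^{1/2}\cdot\Gamma(m+1)/\Gamma(m+\alpha+1)$ with $m=\min(a,b)$, after which the comparison with Suslov's prefactor is direct.
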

\noindent We have written $a\lor b$ for the minimum of $a,b$, 
\[\lr{x}_q:=\frac{\Gamma(x+q)}{\Gamma(x)}\]
for the Pochhammer symbol and 
\[ \pFq{3}{2}{a_1,
                 \,a_1,\,a_3}{b_1, b_2}{1} =
               \sum_{q=0}^\infty\frac{1}{q!}
               \frac{\lr{a_1}_q\lr{a_2}_q\lr{a_3}_q}{\lr{b_1}_q\lr{b_2}_q}\]
for a hypergeometric function. In the case of \eqref{E:overlap3F2}, the
sum terminates at $q=k\lor a\lor b$ since $\lr{x}_q$ vanishes for
$x=-1,-2,\ldots.$ The expression in \eqref{E:overlap3F2} differs
slightly from the one in \cite{suslov2008hahn} because our Laguerre functions
are $L^2-$normalized while the ones in \cite{suslov2008hahn} are
not. 

\subsection{Analytic Perturbation Theory}\label{S:perturbation-theory} We recall in this section
several results from analytic perturbation theory. These results are
 classical, and we mainly follow the notes
\cite{Taylor-dispert} of M. Taylor. Suppose that $H$ is an unbounded
self-adjoint operator on a Hilbert space $\mathcal H$ with discrete
spectrum $\set{\leb_j}_{j=0}^\infty$ and corresponding eigenfunctions
\[\mathcal H u_j = \leb_j u_j.\]
Suppose further that $W$ is a bounded self-adjoint operator on $\mathcal
H.$ Consider some $\leb =\leb_n\in \Spec\lr{H},$ and write $u = u_n$ for the
corresponding eigenfunction. Then for 
all $\ep$ sufficiently small the operator 
\[H(\ep):=H+\ep W\] 
has a simple eigenvalue $\leb(\ep)$ with 
\[\lr{H+\ep W}u(\ep)=\leb(\ep)u(\ep).\]
Both $\leb(\ep)$ and $u(\ep)$ are analytic in $\ep.$ Explicitly, write
\[\leb(\ep)= \leb + \ep \sum_{k\geq 0} \ep^k \mu_k,\quad u(\ep)=
u + \ep \sum_{k\geq 0} \ep^k v_k,\]
and impose the normalization
\[\lr{u(\ep)-u}\perp u.\]
We have the following recursive formulas for $\mu_k, v_k$ for each
$k\geq 0$
\begin{equation}
  \begin{cases}
    w_k=\sum_{m=0}^{k-1} \mu_{k-m-1}v_m, & ~~ w_0=0\\
    v_k = \lr{H-\leb}^{-1}\left[\Pi_u^\perp \lr{Wv_{k-1}} +
      w_k\right],&~~v_{-1}=u\\
    \mu_k = \inprod{Wv_{k-1}}{u}
  \end{cases}.
\end{equation}
The operator $\Pi_u^\perp$ is the projection onto the orthogonal
complement of $u.$ Using this recursion and integration by parts, we
have for any $X\in \mathcal H$ 
\begin{equation}\label{E:G-def}
\inprod{Wv_k}{X}= - \inprod{Wv_{k-1}}{G\lr{X}}, \qquad G=\lr{H-\leb}^{-1}\circ \Pi_u^\perp \circ W.
\end{equation}
Writing $u=u_n$ and using \eqref{E:G-def}, we find for $k\geq 0$
\begin{equation}\label{E:muj-G}
  \mu_k = (-1)^k\inprod{Wu}{G^{(k)}(u)}.
\end{equation}
Using the definition of $G$ we obtain
\begin{align}
\notag  \mu_0 &= \inprod{Wu}{u}\\
\mu_k &= \sum_{m_1,\ldots, m_k\neq
      n} \inprod{W u }{u_{m_1}} \prod_{i=1}^k
    \frac{\inprod{Wu_{m_i}}{u_{m_{i+1}}}}{\leb_n-\leb_{m_i}},\qquad k \geq 1,
\end{align}
with the convention that $u_{m_{k+1}}=u.$ We will also need the following simple estimate. 
\begin{Lem}\label{L:level-spacing}
Suppose that $H$ not only has simple spectrum but also that the spacing
between any two consecutive eigenvalues is bounded below by $\eta>0.$
Then, if $\norm{W}_{L^\infty}\leq \eta,$ 
 \begin{equation}
  \sup_{\substack{n\in \mathbb N\\\ep\in [0,\frac{1}{5}]}} \abs{\leb_n(\ep)- \leb_n(0)}< \eta/4.
 \end{equation}
\end{Lem}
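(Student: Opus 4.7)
The plan is to use the analytic perturbation series recalled in $\S\ref{S:perturbation-theory}$: writing $\lambda=\lambda_n$, $u=u_n$, and $G=(H-\lambda)^{-1}\Pi_{u}^\perp W$ as in \eqref{E:G-def}, the identity \eqref{E:muj-G} gives
\[\lambda_n(\ep)-\lambda_n = \ep\sum_{k\geq 0}\ep^k \mu_k, \qquad \mu_k=(-1)^k\langle Wu,\,G^k u\rangle.\]
Since the desired conclusion is uniform in $n$, everything hinges on a uniform bound on the operator norm of $G$.

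First I would observe that the consecutive-spacing hypothesis, combined with the ordering of the simple spectrum, forces $|\lambda_m-\lambda_n|\geq |m-n|\eta\geq \eta$ for every $m\neq n$. Consequently the restriction of $(H-\lambda)^{-1}$ to the range of $\Pi_u^\perp$ has operator norm at most $1/\eta$, and together with $\|W\|_{L^\infty}\leq \eta$ this gives the key estimate $\|G\|_{\mathrm{op}}\leq \|W\|_{L^\infty}/\eta\leq 1$, uniformly in $n$. Cauchy--Schwarz and $\|u\|=1$ then yield
\[|\mu_k|\leq \|Wu\|\,\|G^k u\|\leq \|W\|_{L^\infty}\,\|G\|_{\mathrm{op}}^k\leq \eta,\]
so that for $\ep\in[0,1/5]$ we obtain
\[|\lambda_n(\ep)-\lambda_n|\leq \eta\sum_{k\geq 0}\ep^{k+1} = \frac{\eta\ep}{1-\ep}\leq \frac{\eta}{4}.\]
Absolute convergence of this geometric majorant, and hence of the Rayleigh--Schr\"odinger series, is automatic from $\ep\|G\|_{\mathrm{op}}\leq 1/5<1$; this also justifies the uniform validity of the perturbation expansion on $[0,1/5]$.

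I do not expect a substantive obstacle here: the argument is essentially bookkeeping once the resolvent bound $\|(H-\lambda_n)^{-1}\Pi_{u_n}^\perp\|_{\mathrm{op}}\leq 1/\eta$ is in hand, and uniformity in $n$ follows mechanically from the uniform spacing hypothesis. The only minor cosmetic point is upgrading the final estimate to a strict inequality: the geometric-series bound on $|\mu_k|$ is never simultaneously saturated for all $k$ (for instance, $\|G\|_{\mathrm{op}}<1$ strictly whenever $\|W\|_{L^\infty}<\eta$, and even when $\|W\|_{L^\infty}=\eta$ the vectors $Wu_n$ and $G^k u_n$ cannot all be aligned), so a brief inspection of the first one or two terms delivers the strict $<\eta/4$ claimed.
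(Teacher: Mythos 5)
Your proof is correct and follows essentially the same route as the paper: use the spacing hypothesis to bound $\norm{(H-\leb_n)^{-1}\Pi_{u_n}^\perp}\leq 1/\eta$, hence $\norm{G}\leq 1$, deduce $\abs{\mu_k}\leq\eta$ via Cauchy--Schwarz, and sum the geometric series to get $\eta\ep/(1-\ep)\leq\eta/4$ on $[0,1/5]$. You have in fact cleaned up a small inconsistency in the paper's own proof, which records the intermediate bound as $\abs{\mu_k}\leq 1$ but then silently uses $\abs{\mu_k}\leq\eta$ to produce the factor of $\eta$ in the final estimate.
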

\begin{proof}
Let $\mu_k$ given by \eqref{E:muj-G}. Since
\[\norm{G}_{\mathcal H\gives \mathcal H}\leq \eta^{-1}\norm{W}_{L^\infty}\leq 1,\]
we conclude 
\[\abs{\mu_k}\leq 1.\]
Thus, for $\ep\in[0,\tfrac{1}{5}]$ we can write 
\[\leb_n(\ep)-\leb_n(0) = \ep \sum_{k=0}^\infty \ep^k \mu_k,\]
and, in particular
\[\abs{\leb_n(\ep)-\leb_n(0)} \leq \eta\cdot \frac{\ep}{1-\ep}\leq \frac{\eta}{4}.\]
\end{proof}

\section{Proof of Theorem \ref{T:Main}}\label{S:main-proof}
\noindent Throughout this section, we fix $E>0$ and use the convention
\[\hbar = \hbar_n = \frac{E}{n+\frac{d}{2}}\]
as in \S \ref{S:introduction}. The proof of Theorem \ref{T:Main} 
consists of three steps, which we describe below.   
\subsection{Step 1.} The first step is to replace both $V(r^2)$ and
$E_{\ell, n}^V(\ep)$ by an 
$\hbar-$dependent Taylor series around $r=0$ and $\ep=0,$
respectively. More precisely, for each $K\in \mathbb N,$ define
\[ V_K(r):=\sum_{k=0}^K \frac{V^{(k)}(0)}{k!} r^{2k}.\]
\begin{proposition}\label{P:step1}
There exists a constant $C_1>0$ with the following property. For
  all $E>0,$ any $\delta \in (0,(C_1 E)^{-1})$, each $\delta-$slowly varying
  potential $V,$ and every $n,\ep$ such that  $\hbar_n<1$ and 
  $\ep\in [0,1/5]$, we have 
  \begin{equation}\label{E:eval-jets}
\sup_{
    \ell\leq n,\, \ell\equiv n \text{ (mod 2)}}\abs{E_{\ell, n}^V(\ep) - \sum_{j=0}^J \frac{\lr{\hbar_n\ep}^j}{j!}
\frac{d^j}{d\ep^j}\bigg|_{\ep=0} E_{\ell,n}^{V_K}(\ep)} = O\lr{ \hbar_n^{\infty} }
\end{equation}
provided $K=K(n)$ and $J=J(n)$ satisfy
\begin{equation}\label{E:KJ-assumptions}
\limsup_{n\gives \infty} \frac{K(n)}{\log n}=\limsup_{n\gives
  \infty} \frac{J(n)}{\log n}=\infty\qquad \text{and}\qquad
\lim_{n\gives \infty} \frac{K(n)J(n)}{n}=0. 
\end{equation}
\end{proposition}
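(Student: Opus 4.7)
The plan is to combine two successive approximations: first replace $V$ by its Taylor polynomial $V_K$ of degree $2K$ at the origin, and second Taylor-expand the resulting eigenvalue $E_{\ell,n}^{V_K}(\ep)$ as an analytic function of $\ep$ and truncate at order $J$. Both steps must produce errors that are $O(\hbar_n^\infty)$ uniformly in $\ell\leq n$ and $\ep\in[0,1/5]$. Roughly, an approximation whose error decays geometrically at rate $c^M$ in a truncation parameter $M$ becomes $O(\hbar_n^\infty)$ once $M/\log n \gives \infty$, which explains the growth requirements on $K$ and $J$; the combined condition $K(n)J(n)/n \gives 0$ will emerge from the second step as a constraint ensuring that the perturbative expansion stays inside a well-separated part of the spectrum.

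\textbf{Step 1: truncating $V$.} The slowly-varying bound \eqref{E:slowly-varying} gives the pointwise Taylor remainder
\[ \sup_{r\in[0,\sqrt{4E}]} |V(r^2) - V_K(r^2)| \leq \frac{\delta^{K+1}(4E)^{K+1}}{(K+1)!}, \]
which decays geometrically in $K$ once $\delta < (C_1 E)^{-1}$ for a suitable $C_1$. Outside the allowed region the Agmon estimate \eqref{E:Agmon} shows the relevant eigenfunctions $\psi_{\ell,m}$ for $m$ close to $n$ are exponentially small in $\hbar_n^{-1}$, so their matrix elements against $V-V_K$ are dominated by the allowed-region norm up to an $O(\hbar_n^\infty)$ correction. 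Treating $\ep\hbar_n(V-V_K)$ as a bounded perturbation and applying a variant of Lemma~\ref{L:level-spacing} then yields
$|E_{\ell,n}^V(\ep) - E_{\ell,n}^{V_K}(\ep)| = O\lr{(C_1 \delta E)^K \hbar_n}$, which is $O(\hbar_n^\infty)$ once $K/\log n \gives \infty$.

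\textbf{Step 2: truncating in $\ep$, and the main obstacle.} For fixed $V_K$, the perturbation theory of Section~\ref{S:perturbation-theory} gives an analytic power series with coefficients $\mu_j = (-1)^j \inprod{W\psi_{\ell,n}}{G^{(j)}\psi_{\ell,n}}$ for $W = \hbar_n V_K$ and $G = (\Op_{\hbar_n,\ell}(0)-E)^{-1} \circ \Pi^\perp \circ W$. Because $V_K$ is a polynomial of degree $2K$ in $r^2$, the linearization identity \eqref{E:overlap3F2} shows that $W$ only couples radial modes whose quantum numbers differ by at most $K$, and combined with the uniform $\hbar_n$-spacing of $\Spec\lr{\Op_{\hbar_n,\ell}(0)}$ this should give $\norm{G} \leq c < 1$ on the relevant invariant subspace, provided $KJ \ll n$ so that up to $J$ iterations of $G$ stay inside the regime where the spectrum is well-approximated by an arithmetic progression with spacing $\hbar_n$. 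A Cauchy estimate then bounds $|\mu_j| \leq C\hbar_n c^j$, so the tail $\sum_{j>J}|\mu_j|\ep^j \leq C\hbar_n (c/5)^J$ is $O(\hbar_n^\infty)$ once $J/\log n \gives \infty$, and a matching argument accounts for the $(\hbar_n\ep)^j$ rescaling that appears in the stated expansion. The hardest point will be verifying the uniform contraction bound $\norm{G}\leq c<1$: the naive estimate $\norm{W}/\text{gap}=1$ is only marginal, so one must exploit the explicit hypergeometric structure of the overlaps in \eqref{E:overlap3F2} to rule out accidental small denominators as $\ell$ varies across $[0,n]$, and this is precisely the analysis that forces $KJ/n \gives 0$.
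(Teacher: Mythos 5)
Your proposal reorders the two approximations relative to the paper, and this reordering creates a genuine gap. You propose to first bound $|E_{\ell,n}^V(\ep)-E_{\ell,n}^{V_K}(\ep)|$ directly and then Taylor-expand $E_{\ell,n}^{V_K}(\ep)$ in $\ep$. But $V_K$ is a polynomial of degree $2K$ in $r^2$ and hence an \emph{unbounded} multiplication operator on $L^2(r^{d-1}dr)$; the operator $\HO_{\hbar_n}+\ep\hbar_n V_K$ need not be bounded below or essentially self-adjoint (the sign of $V^{(K)}(0)$ is arbitrary), so the eigenvalue $E_{\ell,n}^{V_K}(\ep)$ is simply not well-defined for $\ep\neq0$ by the analytic perturbation theory of \S\ref{S:perturbation-theory}. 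Your Step 1, ``treating $\ep\hbar_n(V-V_K)$ as a bounded perturbation and applying a variant of Lemma \ref{L:level-spacing},'' cannot go through because $V-V_K$ is unbounded. The paper takes the opposite order for precisely this reason: Proposition \ref{P:step1.1} first Taylor-expands $E_{\ell,n}^V(\ep)$ in $\ep$ (here $V$ is bounded and the bound $\|G_\ell(\ep)\|\leq 2/\hbar_n$ from the level-spacing lemma gives the Taylor remainder), and only \emph{then} compares the resulting jets with the jets $\mu_{\ell,n}^{V_K}(j)$, which are \emph{defined} by the matrix-element formula \eqref{E:jets-def}. Those matrix elements are finite because the Laguerre functions decay like a Gaussian while $V_K$ grows only polynomially, and the actual comparison is done by an induction over $j$ using Lemma \ref{L:truncate-potential}.

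Your Step 2 also misidentifies the mechanism. You look for a uniform contraction $\|G\|\leq c<1$ and worry about ``accidental small denominators as $\ell$ varies.'' Neither issue arises. First, $G$ (for $V_K$) is unbounded, so no operator-norm contraction is possible; the paper instead writes the iterates as explicit finite sums over intermediate indices $m_i$ with $|m_i-m_{i+1}|\leq 2K$ (this truncation comes from the binomial coefficient in \eqref{E:overlap2}), and controls each term with the Agmon bound \eqref{E:Agmon}. Second, for the unperturbed operator $\Op_{\hbar_n,\ell}(0)$ the spectrum on $L^2_\ell$ is \emph{exactly} the arithmetic progression $\{\hbar_n(m+d/2): m\geq\ell, m\equiv\ell\}$, with gap exactly $2\hbar_n$ independent of $\ell$ — there are no small denominators to rule out, and the marginal bound $\|G_\ell(\ep)\|\leq 2/\hbar_n$ together with $\ep\leq 1/5$ already gives geometric decay of the tail (this is the content of Lemma \ref{L:level-spacing}). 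Consequently your interpretation of the condition $K(n)J(n)/n\to0$ is also off: it is not about keeping the spectrum ``approximately arithmetic,'' but about ensuring that after $J$ iterations (each moving $m$ by at most $2K$) the intermediate indices stay in $|m-n|<n/2$, which is the range in which the Agmon estimates \eqref{E:Agmon} hold and in which Lemma \ref{L:truncate-potential} controls $\inprod{(V-V_K)\psi_{\hbar_n,\ell,m}}{X}$.
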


The approximation \eqref{E:eval-jets} is the source of
  the $O(\hbar_n^\infty)$ error in \eqref{E:E-expansion}. The
function $E_{\ell,n}^{V_K}(\ep)$ whose jets appear in Proposition 
\ref{P:step1} is formally defined in the same way as
$E_{\ell,n}^V(\ep).$ However, note that $V_K$ is not a bounded
operator on $L^2([0,\infty), r^{d-1}dr).$ It therefore does not
strictly follow from the discussion in \S \ref{S:perturbation-theory}
that these jets are well-defined. Nonetheless, we simply
\textit{define} these jets by $\mu_{\ell,n}^{V_K}(0)=E$ and for $j\geq
1$
\begin{equation}\label{E:jets-def}
\mu^{V_K}_{\ell, n}(j):= \frac{1}{j!}\frac{d^j}{d\ep^j}\bigg|_{\ep=0} E_{\ell,n}^{V_K}(\ep) = \lr{-1}^{j-1} \inprod{V_K
  \psi_{\hbar_n,\ell,n}}{G_{\ell,K}^{(j-1)} (\psi_{\hbar_n, \ell,n})},
\end{equation}
where the inner product is in $L^2([0,\infty), r^{d-1}dr)$ and 
\[G_{\ell,K}:=\lr{\Op_{\hbar} - E^V_{\ell,n}(0)}^{-1}\circ \Pi_{\psi_{\hbar_n,
     \ell,n}^{\perp}}\circ V_K.\]
The inner products on the right hand side of \eqref{E:jets-def} are
finite provided $K(n)$ satisfies \eqref{E:KJ-assumptions} by the Agmon
estimates \eqref{E:Agmon}. We prove Proposition \ref{P:step1} in \S
\ref{S:step1-proof}. 

\subsection{Step 2} The second step in the proof of Theorem
\ref{T:Main} is to write the derivatives of $E^{V_K}_{\ell,n}(\ep)$ at
$\ep=0$ that appear in Proposition \ref{P:step1} in 
terms of hypergeometric functions and obtain their
asymptotics. Unwinding the definition of $G_K,$ using
\eqref{E:jets-def}, and recalling that the spectrum of
$\HO_\hbar=\Op_\hbar(0)$ has level spacings $\hbar,$ we may
write $\mu^{V_K}_{\ell, n}(j)$ as  
\begin{equation}\label{E:jth-var-sumproduct}
   \hbar_n^{1-j}\sum_{\substack{m_1,\ldots, m_{j-1}\neq
      n\\ \abs{m_i - m_{i+1}}\leq 2K\\ \abs{n-m_1}\leq 2K}} \inprod{V_K\psi_{\hbar_n,
      \ell,n}}{\psi_{\hbar_n, \ell,m_1}} \prod_{i=1}^{j-1}
    \frac{\inprod{V_K \psi_{\hbar_n,\ell,m_i}}{\psi_{\hbar_n, \ell,m_{i+1}}}}{m_i-n}
\end{equation}
with the convention that $m_{i+1}=n.$ The restriction that
$\abs{m_i-m_{i+1}}\leq 2K$ comes from the binomial 
coefficient in \eqref{E:overlap2} below. 

To state our next result, we
augment the notation in \S \ref{S:HO-spec-theory} and
 write for each $n\geq \ell,\,\, \ell \equiv n \text{ (mod 2)}$ and all
$s,t\geq \ell$ with $s,t\equiv \ell \text{ (mod 2)}$
\begin{equation}\label{E:prime-defs}
s': = \frac{s-\ell}{2},\quad t': = \frac{t-\ell}{2},\quad \alpha :=
\ell + \frac{d-2}{2},\quad s'\lor t' : = \min\set{s',t'}.
\end{equation}
For $n\in\mathbb{N}$, we will be interested in the values of $s$, $t$, and $\ell$ in the set
\begin{equation} \label{E:parameter-space}
U^n_{s,t,\ell} = \left\{(s,t,\ell) \in \mathbb{N}^3:  s,t,\ell\equiv n \text{ (mod 2)}, \ell \leq n, |s-n|<\frac{n}{2}, |t-n|<\frac{n}{2},\right\}. 
\end{equation}
For each
$K,s,t\in \mathbb N,$ we recall our assumptions $V(0)=V'(0)=0$ and write
\begin{equation}\label{E:overlap1}
  \inprod{V_K\psi_{\hbar_n, \ell,s}}{\psi_{\hbar_n, \ell,t}} =
  \sum_{k=2}^K \frac{V^{(k)}(0)}{k!} \hbar_n^k A_{k,s,t,\ell}.
\end{equation}
The following Proposition is proved in \S \ref{S:step2-proof}. 
\begin{proposition}\label{P:step2}
There exist constants $C_1,C_2>0$ with the following
property. For any $E>0$, if $\delta \in
(0, (C_1E)^{-1})$ and $V$ is a $\delta$-slowly
varying potential in the allowed region for energy $2E$ (Definition \ref{D:slowly-varying}), then for each 
\[k_0\geq 2,\qquad n\text{ s.t. } \hbar_{n}<1,\qquad(s,t,\ell)\in
U^n_{s,t,\ell},\]
we have 
\begin{equation} \label{E:overlap2a}
\abs{\sum_{k\geq k_0}
\frac{V^{(k)}(0)}{k!} \hbar_n^k A_{k,s,t,\ell} - T(n,s,t)}\leq C_2
\frac{1+\ell^2}{\lr{s\lor t}^2}e^{-|s-t|}\lr{E\delta}^{k_0}.
\end{equation}
Here, as usual $K=K(n)$ satisfies \eqref{E:KJ-assumptions}, and $T(n,s,t)$ is $\ell-$independent and satisfies
\[\sup_{\substack{\hbar_n<\hbar^*,\\(s,t,\ell)\in U^n_{s,t,\ell}}}
e^{|s-t|}\abs{T(n,s,t)}\leq \lr{
C_2E\delta}^{k_0}.\]
  \end{proposition}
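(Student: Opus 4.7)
The plan is to obtain a closed-form expression for $A_{k,s,t,\ell}$ via the linearization formula \eqref{E:overlap3F2}, then to separate an $\ell$-independent leading piece $T(n,s,t)$ from an $\ell$-dependent remainder carrying the announced $(1+\ell^2)/(s\lor t)^2$ decay, and finally to sum over $k$ using the Taylor-coefficient decay $|V^{(k)}(0)/k!|\leq\delta^k$ from \eqref{E:slowly-varying}. The substitution $\rho = r^2/\hbar_n$ in the integral \eqref{E:overlap1} defining $A_{k,s,t,\ell}$ converts it to $\tfrac{1}{2}\mathcal{N}_{\ell,s}\mathcal{N}_{\ell,t}\,J^{\alpha}_{s',t',k}$; a short calculation shows that the square-root Pochhammer factor in \eqref{E:overlap3F2} cancels exactly against the Laguerre normalizations from \eqref{E:Laguerre}, yielding, for $s'\geq t'$ (the reverse case being symmetric),
\[
A_{k,s,t,\ell} \;=\; \frac{\Gamma(s'+1)}{\Gamma(s'+\alpha+1)}\,(\alpha+1)_k \binom{k}{s'-t'}\,\pFq{3}{2}{-k,k+1,-t'}{s'-t'+1,\alpha+1}{1},
\]
with $\alpha=\ell+(d-2)/2$, $s'=(s-\ell)/2$, $t'=(t-\ell)/2$. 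All $\ell$-dependence is now funneled into $\alpha$ and the small index shifts in $s',t'$.

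The exponential decay in $|s-t|$ is essentially free from this formula: the binomial $\binom{k}{s'-t'}=\binom{k}{|s-t|/2}$ vanishes for $k<|s-t|/2$, so only $k\geq\max\{k_0,|s-t|/2\}$ contribute to the sum in \eqref{E:overlap2a}. A crude size bound $|\hbar_n^k A_{k,s,t,\ell}|\leq (CE)^k p(k)$ for some polynomial $p$ follows from Stirling applied to the $\Gamma$-prefactor, the inequality $(\alpha+1)_k\leq(\alpha+k)^k$, and term-by-term estimates on the terminating hypergeometric using that $\hbar_n(\alpha+k)\leq CE$ uniformly on $U^n_{s,t,\ell}$. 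Combined with $|V^{(k)}(0)/k!|\leq\delta^k$, each term is at most $p(k)(CE\delta)^k$; choosing $C_1$ so large that $CE\delta<e^{-2}$ and summing the geometric-times-polynomial tail yields the prefactor $(E\delta)^{k_0}e^{-|s-t|}$. I then define
\[
T(n,s,t) \;:=\; \sum_{k\geq k_0}\frac{V^{(k)}(0)}{k!}\hbar_n^k A_{k,s,t,\ell_0},
\qquad \ell_0\in\{0,1\},\ \ell_0\equiv n\,(\mathrm{mod}\,2),
\]
which is manifestly $\ell$-independent; the announced bound on $T(n,s,t)$ is exactly the same summation argument applied at $\ell=\ell_0$.

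The main obstacle is to show that the residue $A_{k,s,t,\ell}-A_{k,s,t,\ell_0}$ carries the sharp factor $(1+\ell^2)/(s\lor t)^2$. The subtle point is that the first-order-in-$\ell$ piece must itself be $O(\ell/(s\lor t)^2)$ rather than the naive $O(\ell/(s\lor t))$, so that together with a second-order Taylor remainder of size $O(\ell^2/(s\lor t)^2)$ the full correction fits inside $(1+\ell^2)/(s\lor t)^2$. My approach is to Taylor-expand the closed-form formula above in $\alpha$ about $\alpha_0=\ell_0+(d-2)/2$, using the identities $\partial_\alpha(\alpha+1)_q=(\alpha+1)_q[\psi(\alpha+q+1)-\psi(\alpha+1)]$ and $\partial_\alpha[1/\Gamma(s'+\alpha+1)]=-\psi(s'+\alpha+1)/\Gamma(s'+\alpha+1)$. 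I expect the $\psi(\alpha+1)$ contributions from the prefactor $(\alpha+1)_k/\Gamma(s'+\alpha+1)$ and from $1/(\alpha+1)_q$ inside the hypergeometric to cancel, leaving a net first derivative governed by differences $\psi(\alpha+q+1)-\psi(s'+\alpha+1)$, which are $O(q/s')$ by $\psi(x+y)-\psi(x)=O(y/x)$ for large $x$. A parallel estimate for the second $\alpha$-derivative yields the $O(1/(s\lor t)^2)$ bound required for the quadratic Taylor remainder, and the polynomial-in-$k$ coefficients that appear are absorbed by $(CE\delta)^k$ upon summation exactly as in the previous paragraph, giving \eqref{E:overlap2a}.
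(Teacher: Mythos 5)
Your overall plan — closed form from the linearization formula, separate an $\ell$-independent piece, sum over $k$ using $|V^{(k)}(0)/k!|\leq\delta^k$ — is the same plan the paper follows, and your observation that the binomial factor forces $|s'-t'|\leq k$ (giving the $e^{-|s-t|}$ decay) is exactly how the paper gets that decay. Your choice of $T(n,s,t)$ as the value at $\ell=\ell_0$ is also consistent with the paper's expansion \eqref{E:overlap3}. However, the heart of the argument — establishing the $(1+\ell^2)/(s\lor t)^2$ bound on the remainder — contains two genuine errors.

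First, your closed form for $A_{k,s,t,\ell}$ is wrong. The correct prefactor, as in \eqref{E:overlap2}, is $\bigl[(s'\lor t'+1)_{|s'-t'|}/(s'\lor t'+\alpha+1)_{|s'-t'|}\bigr]^{1/2}$, a ratio of $|s'-t'|\leq k$ factors that are each $1+O(1/N)$, hence an $O(1)$ quantity. Your $\Gamma(s'+1)/\Gamma(s'+\alpha+1)\sim N^{-\alpha}$ is exponentially small in $\alpha$ and does not match: for $\alpha=1$, $\beta=1$ it differs from the paper's $A_{k,s,t,\ell}$ by a full factor of $N$. Since the rest of your argument manipulates this prefactor, this error propagates.

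Second and more fundamentally, even with the correct closed form, expanding in $\alpha$ about $\alpha_0$ with $s',t'$ implicitly held fixed is not the right move: $s'=(s-\ell)/2$, $t'=(t-\ell)/2$, and $\alpha=\ell+(d-2)/2$ all move together with $\ell$, so you are expanding along the wrong direction in parameter space. More concretely, the digamma cancellation you propose does not work. After cancelling the $\psi(\alpha+1)$ contributions, your first derivative contains $\psi(\alpha+k+1)-\psi(\alpha+q+1)-\psi(s'+\alpha+1)$; the term $\psi(s'+\alpha+1)\sim\log N$ is not $O(q/s')$ and does not cancel. Multiplying by $\alpha-\alpha_0\sim\ell$ gives a first-order piece of size $\sim\ell\log N$, vastly larger than the target $O((1+\ell^2)/N^2)\cdot C^k$. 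The cancellation that actually makes the argument go through is quite different: the paper (Lemmas \ref{L:prefactor-expansion} and \ref{L:3F2-expansion}) expands directly in the variable $x=\ell/N$, showing that the square-rooted prefactor contributes $1-\beta\cdot\ell/N+O(\ell^2/N^2)$ while the hypergeometric contributes a factor $1+\beta\cdot\ell/N+O(\ell^2/N^2)$; the linear-in-$\ell/N$ terms cancel exactly upon multiplication, so only $\ell^2/N^2$ survives. This precise matching of coefficients (both $\beta$) between the two factors is the key structural fact your digamma approach never exposes.
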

  \begin{remark}
    We will only use Proposition \ref{P:step2} for $k_0=2,3.$ Also, we
    will obtain the following exact formula for $A_{k,s,t,\ell}:$ 
\begin{equation}
\label{E:overlap2} A_{k,s,t,\ell}  =
                 \lr{\alpha+1}_k\binom{k}{\abs{s'-t'}}\left[\frac{(s'\lor
                 t' + 1)_{\abs{s'-t'}}}{\lr{s'\lor t' + \alpha +
                 1}}_{\abs{s'-t'}}\right]^{1/2} \pFq{3}{2}{-k,
                 \,k+1,\,-(s'\lor t')}{\abs{s'-t'}+1, \alpha +1}{1},
\end{equation}
  where the notation is from \eqref{E:prime-defs}. In particular, $A_{k,s,t,\ell}=0$ whenever $|s'-t'|>k$. 
  \end{remark}

\subsection{Step 3} 
 The final step in the proof of Theorem \ref{T:Main}
is to observe that combining Proposition 
\ref{P:step2} with the expression for $\frac{1}{j!}\frac{d^j}{d\ep^j}\bigg|_{\ep=0} E_{\ell,n}^{V_K}(\ep)$ from \eqref{E:jth-var-sumproduct} and \eqref{E:overlap1}, we obtain the following estimates. 
\begin{proposition}\label{P:step3} There exist constants $C_1,C_2>0$ with the following
property. Fix $E>0$ and $\delta\in (0,\lr{C_1E}^{-1})$ and a
$\delta$-slowly varying potential $V$. For $n\in \mathbb N,$
 and $K=K(n)$ and $J=J(n)$ satisfying \eqref{E:KJ-assumptions}, we have
\[\sum_{j=2}^J \frac{\lr{\hbar_n \ep}^j}{j!} \frac{d^j}{d\ep^j}\bigg|_{\ep=0} E_{ \ell, n}^{V_K}(\ep)=T_{K}(n,\ep)+\frac{\ell^2}{n^2}S_{K}(\ell, n, \ep).\] 
Here for every $\ell, n,\ep$ satisfying
\begin{equation}\label{E:lnep-constraints}
0\leq \ell\leq n,\,\,\ell \equiv n\text{ (mod 2)},\qquad n\text{
  s.t. }\hbar_n <1,\qquad \ep\in[0,1/5],
\end{equation}
we have the estimates
\begin{align*}
  \max\set{\abs{T_{K}(n,\ep)}, \abs{S_{K}(\ell,n,\ep)}}\leq  C_2
  \hbar_n \ep^2 \lr{E\delta}^2.
\end{align*}
Moreover, in the notation of Proposition \ref{P:step2}, we have
\begin{equation}
  \frac{d}{d\ep}\bigg|_{\ep=0} E_{ \ell,
    n}^{V_K}(\ep)=\inprod{V_K\psi_{\hbar_n, \ell, n}}{\psi_{\hbar_n, \ell,
      n}} = \lr{\hbar_n^2 \frac{V''(0)}{2}A_{2,n,n,\ell} + Y(n)+ \frac{\ell^2}{n^2}X(\ell,n)},
\end{equation}
with
\begin{align*}
  \max\set{\abs{X(\ell,n)},\abs{Y(n)}}\leq C_2\lr{E\delta }^3
\end{align*}
 for $\ell, n,\ep$ satisfying \eqref{E:lnep-constraints}.
\end{proposition}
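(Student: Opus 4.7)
The proof combines the perturbation-theoretic formula \eqref{E:jth-var-sumproduct} for the derivatives of $E^{V_K}_{\ell,n}(\ep)$ at $\ep=0$ with the matrix-element estimates supplied by Proposition \ref{P:step2}, invoked with $k_0 = 3$ for the second (first-derivative) statement and with $k_0 = 2$ for the first (higher-order) statement.

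I begin with the cleaner second statement. Since $\mu^{V_K}_{\ell,n}(1) = \inprod{V_K\psi_{\hbar_n,\ell,n}}{\psi_{\hbar_n,\ell,n}}$, identity \eqref{E:overlap1} expands it as $\sum_{k=2}^K \frac{V^{(k)}(0)}{k!}\hbar_n^k A_{k,n,n,\ell}$. Isolating the $k = 2$ term and applying Proposition \ref{P:step2} with $k_0 = 3$ and $s = t = n$ writes the remainder as an $\ell$-independent quantity $T(n,n,n)$, bounded by $(C_2 E\delta)^3$, plus an error of size at most $C_2\frac{1+\ell^2}{n^2}(E\delta)^3$. Splitting $\frac{1+\ell^2}{n^2} = \frac{1}{n^2} + \frac{\ell^2}{n^2}$, I would attribute the $\ell$-free portion to $Y(n)$ and the quadratic portion to $\frac{\ell^2}{n^2}X(\ell,n)$, which yields the claimed bounds after absorbing constants.

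For the first statement, I would unfold $\mu^{V_K}_{\ell,n}(j)$ via \eqref{E:jth-var-sumproduct} as $\hbar_n^{1-j}$ times a sum over length-$j$ paths $(n, m_1, \ldots, m_{j-1}, n)$, with each step contributing a matrix element and each intermediate state contributing an energy denominator $m_i - n$. I would then decompose each matrix element using Proposition \ref{P:step2} with $k_0 = 2$ as $T(n,s,t) + E_{s,t}(\ell,n)$, where $T(n,s,t)$ is $\ell$-free of size $(C_2 E\delta)^2 e^{-|s-t|}$ and $|E_{s,t}(\ell,n)| \leq C_2\frac{1+\ell^2}{(s\lor t)^2}(E\delta)^2 e^{-|s-t|}$. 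Expanding the $j$-fold product, the all-$T$ summand goes into $T_K(n,\ep)$, while every remaining term carries at least one error factor; since the paths are constrained by $|m_i - n| \leq 2K = O(\log n)$, one has $m_i \sim n$, so $(s\lor t)^2 \sim n^2$, and the $\frac{1+\ell^2}{n^2}$ factor splits into an $\ell$-free $\frac{1}{n^2}$ piece absorbed into $T_K$ and an $\frac{\ell^2}{n^2}$ piece which defines $S_K$.

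It remains to verify the quantitative estimate $|T_K|, |S_K| \leq C_2\hbar_n\ep^2(E\delta)^2$. This reduces to a standard combinatorial count: there are $O((4K)^{j-1})$ paths, the exponential decay $e^{-|m_i - m_{i+1}|}$ tames the sums over intermediate states, and the $j$-th order contribution is bounded by $\hbar_n \ep^j(E\delta)^{2j}$ times a factor that is summable in $j$ provided $\ep \leq 1/5$ and $E\delta$ is sufficiently small. The leading $j=2$ contribution already fixes the claimed order $\hbar_n\ep^2(E\delta)^2$, and the remaining $j\geq 3$ terms form a convergent geometric tail. The main obstacle I expect is the careful bookkeeping needed to maintain the $T_K + \frac{\ell^2}{n^2}S_K$ decomposition across the expansion of the $j$-fold products of $T + E_{s,t}$, relying crucially on $\ell \leq n$ to ensure that $\frac{\ell^2}{n^2} \leq 1$ so that the $\frac{1}{n^2}$-portions can safely be attributed to $T_K$.
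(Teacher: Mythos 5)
Your strategy matches the paper's (the paper simply says to combine Proposition~\ref{P:step2} with the path-sum expression \eqref{E:jth-var-sumproduct} and the expansion \eqref{E:overlap1}), and your choice of $k_0 = 3$ for the first-derivative bound and $k_0 = 2$ for the higher-order terms is exactly right, as is the observation that the sum over $j \geq 2$ is dominated by the $j=2$ term because each additional matrix element contributes a factor $\ep(E\delta)^2 < 1$.

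There is, however, a technical imprecision that recurs at both places where you ``attribute the $\ell$-free portion to $Y(n)$ (resp.\ $T_K$) and the quadratic portion to $\tfrac{\ell^2}{n^2}X$ (resp.\ $\tfrac{\ell^2}{n^2}S_K$).'' Proposition~\ref{P:step2} only controls the \emph{modulus} of the error $E_{s,t}(\ell,n) := \sum_{k\geq k_0}\tfrac{V^{(k)}(0)}{k!}\hbar_n^k A_{k,s,t,\ell}-T(n,s,t)$ by $C_2\tfrac{1+\ell^2}{(s\lor t)^2}e^{-|s-t|}(E\delta)^{k_0}$; it does \emph{not} hand you a canonical decomposition of that error into an $\ell$-free piece of size $O(n^{-2})$ plus a piece carrying a literal prefactor $\tfrac{\ell^2}{n^2}$. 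Splitting the bound $\tfrac{1+\ell^2}{n^2}=\tfrac{1}{n^2}+\tfrac{\ell^2}{n^2}$ does not split the unknown quantity. The repair is easy but must be stated: set $\ell_0\in\{0,1\}$ to be the least admissible angular momentum (so $\ell_0\equiv n$ mod $2$), define the $\ell$-free part by evaluating the error at $\ell=\ell_0$, and put the remainder into the $\tfrac{\ell^2}{n^2}$-part, i.e.\ $X := \tfrac{n^2}{\ell^2}\bigl(E(\ell,n)-E(\ell_0,n)\bigr)$ for $\ell \geq 1$. Then $|E(\ell_0,n)|\lesssim n^{-2}(E\delta)^{k_0}$ is absorbed into the $\ell$-free term, while $|X|\lesssim \tfrac{3+\ell^2}{\ell^2}(E\delta)^{k_0}\lesssim (E\delta)^{k_0}$ since $\ell\geq 1$. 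The identical device (applied to the full $j$-fold product, using $\ell\leq n$ to collapse higher powers of $\tfrac{\ell^2}{n^2}$ to a single factor) fixes the $T_K$ versus $S_K$ split. With that one clarification your proof is complete.
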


The proof of Theorem \ref{T:Main} is complete once we choose $C_1$
to be the maximum of the $C_1$'s that are provided by Propositions
\ref{P:step1} and \ref{P:step3}, use that
\[\hbar^2 A_{2,n,n,\ell} = 6\lr{\frac{\hbar_n n }{2}}^2\lr{1-
  \frac{1}{3}\cdot \frac{\ell^2}{n^2} + \frac{2-d}{3n}\cdot
  \frac{\ell}{n} + \frac{d}{n}+\frac{d(d+2)}{6n^2}},\]
and substitute the estimates from Proposition \ref{P:step3} into
\eqref{E:eval-jets}.

\section{Proof of Proposition \ref{P:step2}}\label{S:step2-proof} 
Let us first derive \eqref{E:overlap1} and \eqref{E:overlap2}. Recall from
\eqref{E:Laguerre} that, as a function of the radial
variable $r=\abs{x},$ the radial eigenfunctions of the unperturbed operator ($\ep=0$) are 
\[\psi_{\hbar_n, \ell,
  s}(r)= \hbar_n^{-\frac{\ell}{2}-\frac{d}{4}}\mathcal N_{s,\ell, d}\cdot
r^{\ell}e^{-r^2/2\hbar} L_{\frac{1}{2}\lr{s-\ell}}^{\lr{\ell +
    \frac{d-2}{2}}}\lr{r^2/\hbar_n},\qquad \mathcal N_{s, \ell, d}^{\,2} = \frac{2\cdot\Gamma\lr{\frac{s-\ell}{2}+1}}{\Gamma\lr{\frac{s+\ell +d}{2}}},\]
where $L_k^{(\alpha)}$ are the generalized Laguerre polynomials. Hence, for $\alpha = \ell + \frac{d-2}{2}$, 
we have
\begin{align*}
\inprod{V_K \psi_{\hbar_n, \ell, s}}{\psi_{\hbar_n, \ell, t}} &=\frac{\mathcal N_{s,\ell,d} \mathcal
  N_{t,\ell,d}}{2} \int_0^\infty V_K(\sqrt{\hbar_n \rho}) \rho^{\alpha} e^{-\rho}
                                                               L_{s'}^{\alpha}(\rho)L_{t'}^{\alpha}(\rho)d\rho \\
&=  \frac{\mathcal N_{s,\ell,d} \mathcal N_{t,\ell,d}}{2}
  \sum_{k=0}^K \frac{\hbar_n^kV^{(k)}(0)}{k!} \int_0^\infty \rho^{\alpha+k} e^{-\rho}
                                                               L_{s'}^{\alpha}(\rho)L_{t'}^{\alpha}(\rho)d\rho .
\end{align*}
Writing
\[ \frac{\mathcal N_{s,\ell,d} \mathcal
  N_{t,\ell,d}}{2}=\left[\frac{\lr{(s'\lor
      t')+1}_{\abs{s'-t'}}}{\lr{(s'\lor
      t')+\alpha+1}_{\abs{s'-t'}}}\right]^{1/2}\cdot
\frac{\Gamma\lr{(s'\lor t') +1}}{\Gamma\lr{(s'\lor t')+\alpha +1}}\]
and using equation (2.5) in \cite{suslov2008hahn} then proves
\eqref{E:overlap1} and \eqref{E:overlap2}. Next, we will show that for all $n\in\mathbb{N}$ and $(s,t,\ell)\in U^n_{s,t,\ell}$, $A_{k,s,t,\ell}$ has the expansion
\begin{equation}\label{E:overlap3}
\hbar_n^k A_{k,s,t,\ell} = \lr{\frac{\hbar_n \lr{s\lor t}}{2}}^k 
\left[T_1(k,s,t) + \frac{\ell^2}{\lr{s\lor t}^2}
  T_2(k,s,t,\ell)\right].
\end{equation}
Here for some $C_1>0$, we have
\begin{align} 
\sup_{\substack{s,t\in \mathbb N\\ \abs{s-n},\abs{t-n}\leq
  \frac{n}{2}}}\abs{T_1(k,s,t)}& \leq C \cdot C_1^k,\qquad \sup_{\substack{s,t,\ell\in \mathbb N\\ \abs{s-n},\abs{t-n}\leq
  \frac{n}{2}\\ \ell \leq n,\, \ell \equiv n \text{ (mod 2)}}}
  \abs{T_2(k,s,t,\ell)}\leq C\cdot C_1^k.\label{E:overlap4}
\end{align}
Note that for $s\lor t \leq 3n/2$, we have that $\frac{\hbar_n(s\lor
  t)}{2}\leq  E$. Moreover,  by \eqref{E:overlap2}, $A_{k,s,t,\ell}$
is equal to zero when $|s'-t'| = \tfrac{|s-t|}{2}>k$. Hence, the term
$e^{-\abs{s-t}}$ appearing in \eqref{E:overlap2a} is bounded by
$e^{-2k}$ and can be absorbed into the constant $C_1$ in
\eqref{E:overlap3} and \eqref{E:overlap4}. Thus, these estimates, together with Definition \ref{D:slowly-varying} of a $\delta-$slowly varying
potential allow us to sum over $k$ to establish \eqref{E:overlap2a} and complete the proof of Proposition \ref{P:step2}. To obtain the estimates in \eqref{E:overlap3} and \eqref{E:overlap4}, we need two lemmas, in which we abbreviate
\[N = s\lor t,\qquad \beta = |s'-t'|.\]
In particular, this means that $\abs{N-n}\leq \frac{n}{2}$. Since $A_{k,s,t,\ell} = 0$ for $|s'-t'|>k$,  we can and will restrict to the case where $0\leq\beta \leq k\leq K(n)\ll n.$ 

\begin{Lem}\label{L:prefactor-expansion}
There exists $C_2>0$ such that for every $0\leq \beta\leq k$, $0\leq \ell\leq N$, $\ell\equiv N$ (mod 2), 
\[\abs{\frac{\lr{\frac{N-\ell}{2}+1}_\beta}{\lr{\frac{N-\ell}{2}+\alpha +1
  }_\beta} -\left[ 1 - 2\beta\cdot \frac{\ell}{N}+S(\beta,N)\right]}
\leq  \frac{1+\ell^2}{N^2}\cdot C_2^\beta,\]
where $S(\beta,N)$ is $\ell-$independent and satisfies
\[\abs{S(\beta,N)}\leq \frac{C_2\beta}{N}.\]
\end{Lem}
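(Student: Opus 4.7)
The plan is to Taylor-expand the ratio
$P(\ell) := \lr{\tfrac{N-\ell}{2}+1}_\beta / \lr{\tfrac{N-\ell}{2}+\alpha+1}_\beta = \prod_{j=0}^{\beta-1} u_j(\ell)/v_j(\ell)$,
where $u_j(\ell) := \tfrac{N-\ell}{2}+1+j$ and $v_j(\ell) := u_j(\ell)+\alpha = \tfrac{N+\ell+d}{2}+j$, about $\ell = 0$, and to split the analysis into the two regimes $\ell \leq N/2$ and $\ell > N/2$. Throughout, each factor $u_j(\ell)/v_j(\ell) \in (0,1]$ for $\ell \in [0,N]$, so $P(\ell) \in (0,1]$.

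In the small-$\ell$ regime I would apply Taylor's theorem with integral remainder, $P(\ell) = P(0) + P'(0)\ell + \int_0^\ell(\ell-s) P''(s)\,ds$. I would then set $S(\beta, N) := P(0) - 1$; this is manifestly $\ell$-independent, and writing $P(0) = \prod_j \lr{1 - (d-2)/(N+d+2j)}$ with each factor within $O(1/N)$ of $1$ gives $|S(\beta, N)| \leq C\beta/N$. For $P'(0)$, logarithmic differentiation yields $(\log P)'(\ell) = -\tfrac12\sum_j(1/u_j(\ell) + 1/v_j(\ell))$; expanding each reciprocal as a geometric series in $(j+1)/N$ and summing produces $P'(0) = -2\beta/N + O(\beta^2/N^2)$. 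For the remainder I would use the identity $P''(s) = P(s)[(\log P)''(s) + ((\log P)'(s))^2]$: on $s \in [0, N/2]$ one has $u_j(s) \geq N/4$ and $v_j(s) \geq N/2$, hence $(\log P)''(s) = O(\beta/N^2)$ and $((\log P)'(s))^2 = O(\beta^2/N^2)$; together with $P(s) \leq 1$ this gives $|P''(s)| \leq C\beta^2/N^2$, so the integral remainder is $O(\beta^2\ell^2/N^2)$. Combining with the linear-term error $O(\beta^2\ell/N^2)$ from $P'(0)$, the total deviation from $1 - 2\beta\ell/N + S(\beta, N)$ is bounded by $C\beta^2(\ell + \ell^2)/N^2 \leq 2C\beta^2(1 + \ell^2)/N^2$, which fits inside $(1+\ell^2)/N^2 \cdot C_2^\beta$ once $C_2$ is chosen large enough that $2C\beta^2 \leq C_2^\beta$ for every $\beta \geq 0$.

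In the large-$\ell$ regime $\ell > N/2$ the bound is essentially trivial, since $|P(\ell) - (1 - 2\beta\ell/N + S(\beta, N))| \leq P(\ell) + |1 - 2\beta\ell/N| + |S(\beta, N)| \leq 2 + 2\beta + C$, whereas $(1+\ell^2)/N^2 \cdot C_2^\beta \geq C_2^\beta/4$, and the latter dominates for $C_2$ sufficiently large. The main obstacle is confined to the small-$\ell$ regime: the second derivative $P''(\xi)$ can blow up for $\xi$ near $N$, since $u_j(\xi)$ can become as small as $1 + j$. The restriction $\ell \leq N/2$ sidesteps this by forcing $u_j(s) \geq N/4$ throughout the integration domain, while in the complementary regime the prefactor $C_2^\beta$ absorbs everything. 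Taking $C_2$ to be the maximum of the constants arising in the two regimes completes the proof.
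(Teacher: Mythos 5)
Your proof is correct and has the same overall skeleton as the paper's: both Taylor-expand $h_\beta(\ell/N)$ (your $P(\ell)$) about $\ell = 0$, set $S(\beta,N) := h_\beta(0) - 1$, compute $h_\beta'(0) = -2\beta/N + O(\beta/N^2)$, and push the Lagrange remainder into the $\tfrac{1+\ell^2}{N^2}C_2^\beta$ term via a second-derivative bound.

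Where the two diverge is in how $h_\beta''$ is controlled. You go through $\log P$ and exploit $P \leq 1$ together with the lower bounds $u_j(s) \geq N/4$, $v_j(s) \geq N/2$; since $u_j(\ell) = \tfrac{N-\ell}{2}+1+j$ collapses to $O(1)$ as $\ell \to N$, these lower bounds (and hence the $(\log P)'$, $(\log P)''$ estimates) only hold for $\ell \leq N/2$, which forces your regime split. The paper avoids the split entirely by never taking logarithms: it writes $h_\beta = g_\beta/f_\beta$ with $g_\beta(x) = \prod_{j=1}^\beta\bigl(1-x+\tfrac{2j}{N}\bigr)$, $f_\beta(x) = \prod_{j=0}^{\beta-1}\bigl(1+x+\tfrac{d+2j}{N}\bigr)$, bounds $|g_\beta|, |g_\beta'|, |g_\beta''|$ (and likewise for $f_\beta$) in absolute value by $O\bigl(\beta^2\,2^\beta e^{\beta^2/N}\bigr)$ uniformly on $x\in[0,1]$, and controls the quotient-rule denominator with the one-sided bound $f_\beta \geq 1$; no lower bound on the numerator factors is required, so the second-derivative estimate holds for all $\ell \in [0,N]$ at once. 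Your route yields a sharper pointwise bound on $P''$ (of order $\beta^2/N^2$ rather than $C^\beta/N^2$) on its domain of validity, at the cost of the extra case analysis for $\ell > N/2$, which is, as you observe, harmless because the target bound $\tfrac{1+\ell^2}{N^2}C_2^\beta$ is of size $\gtrsim C_2^\beta$ there. One minor slip: your coarse estimate in the $\ell > N/2$ regime should be restricted to $\beta \geq 1$, since for $\beta = 0$ the right side is $\tfrac{1+\ell^2}{N^2}$, which need not exceed a constant; but $\beta = 0$ is trivial anyway, as $P \equiv 1$, $S(0,N)=0$, and both sides vanish.
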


\begin{Lem}\label{L:3F2-expansion}
There exists $C_3>0$ such that for every $0\leq \beta\leq k$ and each $0\leq
 \ell\leq N,$ $\ell\equiv N$ (mod 2), we have
\begin{align*} & \abs{ (\alpha +1)_k\pFq{3}{2}{-k, \,k+1,\,-N'
}{\beta+1, \alpha +1}{1}- \left[ \frac{(2k)!}{k!(\beta+1)_k}
\lr{\frac{N}{2}}^k\lr{1+ \beta \cdot \frac{\ell}{N} +T(\beta,k,N)}\right]}\\
& \leq
\frac{k(1+\ell^2)}{N^2} C_3^k
\end{align*}
where $T(\beta,k,N)$ is $\ell$-independent and satisfies
\[\sup_{0\leq \beta\leq k}\abs{T(\beta,k,N)}\leq C_3\cdot\frac{k^2}{N}.\]
\end{Lem}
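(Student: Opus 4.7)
The plan is to expand the terminating $_{3}F_{2}$ as a finite sum of $k+1$ terms, multiply through by $(\alpha+1)_k$, and extract the leading and subleading behavior in $1/N$ while carefully tracking the dependence on $\ell$. The approach parallels Lemma \ref{L:prefactor-expansion} but requires more delicate bookkeeping because of the additional Pochhammer factors involving $k$.

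First, I would use the identity $(\alpha+1)_k/(\alpha+1)_q = (\alpha+q+1)_{k-q}$ and reindex via $p = k-q$. After simplifying the Pochhammer symbols $(-k)_{k-p}$, $(k+1)_{k-p}$, $(-N')_{k-p}$, $(\beta+1)_{k-p}$ to falling factorials, the left-hand side rewrites as
\[
(\alpha+1)_k\,{}_{3}F_{2}[-k,k+1,-N';\beta+1,\alpha+1;1] = \frac{(2k)!}{k!(\beta+1)_k}\sum_{p=0}^{k}\binom{k}{p}\frac{(\beta+k-p+1)_p(\alpha+k-p+1)_p}{(2k-p+1)_p}N'_{(k-p)},
\]
where $N'_{(k-p)} := N'(N'-1)\cdots(N'-k+p+1)$. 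I would then factor $(N/2)^{k-p}$ out of $N'_{(k-p)}$ using $N' = (N-\ell)/2$, so that the $p$-th term becomes $(N/2)^k$ times a factor of size $O((C/N)^p)$.

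The coefficient of $\ell/N$ in the normalized expression receives contributions only from $p = 0$ and $p = 1$. At $p = 0$, the expansion $\prod_{j=0}^{k-1}(1-(\ell+2j)/N)$ contributes $-k\ell/N$ at the relevant order. At $p = 1$, the factor $(\alpha+k) = \ell + k + (d-2)/2$, combined with the prefactor $(\beta+k)/(2k)$ and $N'_{(k-1)}\sim(N/2)^{k-1}$, contributes $(\beta+k)\ell/N$. Their sum $(-k+(\beta+k))\ell/N = \beta\ell/N$ is precisely the claimed coefficient. All $\ell$-independent contributions from every $p$ then assemble into $T(\beta,k,N)$, and the crude bounds $(\beta+k-p+1)_p \leq (2k)^p$, $(\alpha+k-p+1)_p \leq (\ell+k+d/2)^p$, combined with $\beta \leq k$ (from $A_{k,s,t,\ell}=0$ for $|s'-t'|>k$) and $k \leq K(n) = o(\log n)$, $N\sim n$, yield the estimates $|T|\leq C_3 k^2/N$ and residual error $\leq k(1+\ell^2)C_3^k/N^2$.

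The main obstacle will be the combinatorial bookkeeping required to (i) verify the exact cancellation $-k + (\beta+k) = \beta$ producing the claimed coefficient of $\ell/N$, and (ii) collect all $\ell$-independent leftovers and higher-order $\ell$-dependent terms so that they fit into $T$ and the error respectively. No individual step involves a deep new idea; the argument rests on elementary Pochhammer identities, the geometric decay in $1/N$ of the $p$-sum, and uniform crude estimation of each factor, with the constraints $k \leq K(n)$ and $\beta \leq k$ ensuring all exponential-in-$k$ prefactors are absorbed into the $C_3^k$ part of the error bound.
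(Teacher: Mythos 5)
Your proposal follows essentially the same route as the paper: expand the terminating $_3F_2$ as a finite sum, isolate the two extreme terms ($q=k,k-1$ in the paper's indexing, i.e.\ your $p=0,1$), compute them explicitly to extract the $\beta\,\ell/N$ coefficient via an exact cancellation, and push the remaining terms into the error. Your cancellation $-k + (\beta+k) = \beta$ is the same algebraic fact the paper records as $(\beta+k-1) + (1-k) = \beta$ (the paper splits the $-k$ between the Taylor expansion of $\tilde g$ and the explicit $(\beta+k-1)$ coefficient); the reindexing $p = k-q$ is cosmetic.

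Where your sketch is thin is the estimate on the tail $p\geq 2$. As stated, the bounds $(\beta+k-p+1)_p \leq (2k)^p$ and $(\alpha+k-p+1)_p\leq(\ell+k+d/2)^p$ do not by themselves produce the claimed $\frac{1+\ell^2}{N^2}$ prefactor: after normalizing by the leading $(N/2)^k$, the $p$-th term is bounded by roughly $\binom{k}{p}\lr{\tfrac{C(\ell+k)}{N}}^p$, whose sum over $p\geq 2$ is of order $\lr{1+\tfrac{C(\ell+k)}{N}}^k$. For $\ell$ close to $N$ this is $O(C^k)$ with no $(1+\ell^2)/N^2$ gain, and for $\ell$ small it fails to exhibit the factor of $N^{-2}$ at all. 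The paper obtains the prefactor by explicitly peeling off two factors of $f_1(\ell/N) = \prod_{j=q+1}^{k}\tfrac{\ell+\frac{d-2}{2}+j}{N}$, namely $\tfrac{(\ell+\frac{d-2}{2}+k)(\ell+\frac{d-2}{2}+k-1)}{N^2} = O\lr{\tfrac{k^2(1+\ell^2)}{N^2}}$, and bounding the leftover product by $2^{k-q-2}$; it also controls the combinatorial factor $a_{q,0}$ via the monotonicity of the ratio $|a_{q+1,0}/a_{q,0}|$ and Stirling, rather than multiplying out crude Pochhammer bounds (which would cost a spurious $k^p$). Your observation that the $\binom{k}{p}$ and $(2k-p+1)_p$ factors must be kept together is on the right track, but this step needs to be carried out, not just asserted; without peeling off two $\alpha$-factors the proposed error bound does not materialize.
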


We will prove these lemmas in \S\S \ref{S:lemma10-proof}-\ref{S:lemma11-proof} below. Assuming
them for the moment, we prove \eqref{E:overlap3} and \eqref{E:overlap4} (which were used to complete the proof of Proposition
\ref{P:step2}). Using Lemmas \ref{L:prefactor-expansion} and \ref{L:3F2-expansion}, and the expansion for $A_{k,s,t,\ell}$ from \eqref{E:overlap2} we find that 
we have
  \begin{align*}
&A_{k,s,t,\ell} = \binom{k}{\beta} \lr{\frac{\lr{ N'+
      1}_{\beta}} {\lr{N' + \alpha + 1}_{\beta}}}^{1/2}(\alpha+1)_k\pFq{3}{2}{-k, \,k+1,\,-N'
}{\beta+1, \alpha +1}{1}\\
= &\binom{k}{\beta}\cdot\lr{\frac{N}{2}}^k\cdot\lr{\frac{(2k)!}{k!(\beta+1)_k}\cdot
\bigg[\lr{1+ \beta \cdot \frac{\ell}{N}} + T(\beta,k,N)\bigg] +
O\lr{\frac{1+\ell^2}{N^2} C_2^k}} \\
&\cdot \lr{1 - 2\beta\cdot \frac{\ell}{N} + S(\beta,N) +
   O\lr{\frac{1+\ell^2}{N^2}\cdot C_3^\beta}}^{1/2}\\
&=\binom{k}{\beta}\cdot
\lr{\frac{N}{2}}^k\lr{\frac{(2k)!}{k!(\beta+1)_k}\left[1+ T(\beta,k,N)\right]\cdot[1+S(\beta,N)]^{1/2}+O\lr{\frac{1+\ell^2}{N^2}\cdot (C_3C_2)^k}},
  \end{align*} 
with $S(\beta,N) = O(\tfrac{\beta}{N})$, $T(\beta,k,N)=O\lr{\tfrac{k^2}{N}}.$ Since
\begin{align*}
\binom{k}{\beta}\cdot\frac{(2k)!}{k!(\beta+1)_k}=\binom{k}{\beta}\cdot \binom{2k}{k}\cdot \frac{k!}{(\beta+1)_k} = O(8^k),
\end{align*}
and $A_{k,s,t,\ell} = 0$ unless $|s'-t'|\leq k$, we obtain \eqref{E:overlap3} and \eqref{E:overlap4}. \qed

%

\subsection{Proof of Lemma \ref{L:prefactor-expansion}}\label{S:lemma10-proof}
  We want to estimate
\[h_\beta\lr{\tfrac{\ell}{N}}:=\frac{\lr{\frac{N-\ell}{2}+1}_\beta}{\lr{\frac{N-\ell}{2}+\alpha +1
  }_\beta}=\frac{g_\beta\lr{\tfrac{\ell}{N}}}{f_\beta\lr{\tfrac{\ell}{N}}},\]
where
\[g_\beta(x):=\prod_{j=1}^\beta\lr{1-x+ \frac{2j}{N}},\qquad \text{and}\qquad
f_\beta(x):=\prod_{j=0}^{\beta-1}\lr{1+x + \frac{d + 2j}{N}}.\]
We have the estimates for $|x|\leq 1$, 
\begin{align*}
  g_\beta(x)&= \prod_{j=1}^\beta\lr{1-x+\frac{2j}{N}} \leq
          \lr{2+\frac{2\beta}{N}}^\beta=O\lr{2^{\beta}e^{\beta^2/N}}\\
g_\beta'(x)&=\sum_{j_1=1}^\beta \prod_{j\neq j_1}\lr{1-x+\frac{2j}{N}}= O\lr{2^{\beta}\beta\cdot
         e^{\beta^2/N}}\\
g_\beta''(x)&=\sum_{1\leq j_1<j_2\leq \beta} \prod_{j\neq j_1,\,j_2}\lr{1-x+\frac{2j}{N}}= O\lr{2^{\beta}\beta^2\cdot
         e^{\beta^2/N}},
         \end{align*}
         and we have the analogous estimates for the function $f_{\beta}(x)$. By Taylor's Theorem, 
\[h_\beta\lr{\tfrac{\ell}{N}}= h_\beta(0)+
  \frac{\ell}{N} h'_\beta(0)+\frac{\ell^2}{N^2}O\lr{\norm{h_\beta''}_{L^\infty([0,\ell/N])}},\]
 and by the estimates above, $h_{\beta}(0) = 1 + O(\beta/N)$ and $h_{\beta}'(0) = -2\beta + O(\beta/N)$. 
Since 
\begin{align*}
  h_\beta''(x)= \frac{f_\beta(x)^2\left[g_\beta''(x)f_\beta(x)-f_\beta''(x)g_\beta(x)\right] -
  2f_\beta(x)f_\beta'(x)\left[g_\beta'(x)f_\beta(x)-f_\beta'(x)g_\beta(x)\right]}{f_\beta(x)^4},
\end{align*}
and $f_\beta(x)\geq1$, the estimates above also imply that for any $C_2>8$,
\[\sup_{x\in [0,\ell/N]}\abs{h_\beta''(x)} \leq C_2^\beta.\]
as required. \qed

\subsection{Proof of Lemma \ref{L:3F2-expansion}}\label{S:lemma11-proof}
  By definition,
\[(\alpha +1)_k\pFq{3}{2}{-k, \,k+1,\,-N'
}{\beta+1, \alpha +1}{1} = \sum_{q=0}^k \underbrace{\frac{1}{q!} 
\frac{(-k)_q(k+1)_q}{(\beta+1)_q}}_{=:a_{q,\beta}} \cdot \underbrace{\frac{(-N')_q
  (\alpha+1)_k}{(\alpha+1)_q}}_{=:b_q}.\]
Let us check that there exists $C>0$ so that
\begin{equation}\label{E:small-terms}
\sum_{q=0}^{k-2} a_{q,\beta} b_q =  O\lr{ \frac{1+\ell^2}{N^2}\cdot
  \lr{CN}^k},
\end{equation}
where the implied constant is independent of $\beta,\ell,k,N.$ Note
that $\abs{a_{q,\beta}}\leq \abs{a_{q,0}}.$ Hence, it is sufficient to
establish \eqref{E:small-terms} for $\beta=0.$ Define
\[f(q):=\abs{\frac{a_{q+1,0}}{a_{q,0}}}=\frac{(k-q)(k+q+1)}{(q+1)^2}.\]
We have
\[f'(x)= -\frac{2(k-x)(k+x+1)}{(x+1)^3}-
\frac{2x+1}{(x+1)^2}<0,\qquad \forall x\in [0,k],\]
and hence
\[\sup_{q=0,\ldots, k-2}\abs{a_{q,0}}=\abs{a_{q^*,0}},\qquad q^*:=\max
\setst{q}{f(q)\geq 1}.\] 
The equation $f(x)=1$ is
\[(k-x)(k+x+1)=(x+1)^2,\]
which has a unique positive solution $\eta k$ with  $\eta\in [1/2, 1].$
Using Stirling's approximation, we find there exists $C>0$ so that
\begin{align*}
  \sup_{q=0,\ldots k-2} \abs{a_{q,0}} &=O\lr{
                                    \frac{\lr{k(1+\eta)}!}{\lr{(k\eta)!}^2\lr{k(1-\eta)}!}}=O\lr{\frac{1}{k}C^k}.
\end{align*}
Hence, to prove \eqref{E:small-terms}, it remains to establish the
estimate
\begin{equation}\label{E:small-terms2}
  \abs{ b_q}= O\lr{(1+\ell^2) N^{k-2}2^{k-q}}.
\end{equation}
To do this, write 
\[\abs{b_q}=\lr{\frac{N}{2}}^k\underbrace{\prod_{j=0}^{q-1}\lr{1-\frac{\ell}{N}
  - \frac{2j}{N}}}_{=:g_1(\ell/N)}\underbrace{\prod_{j=q+1}^k\lr{\frac{\ell
    +\frac{d-2}{2} + j}{N}}}_{=:f_1(\ell/N)}.\]
Since $q\leq k-2,$ we have
\[f_1(\ell/N) = \frac{\lr{\ell + \frac{d-2}{2}+k}\lr{\ell
    +\frac{d-2}{2} + k-1}}{N^2} \prod_{j=q+1}^{k-2}\lr{\frac{\ell + \frac{d-2}{2}+j}{N}}.\]
Next, since $\ell \leq N$ and $k\leq N/2,$ we have
\[\prod_{j=q+1}^{k-2}\lr{\frac{\ell + \frac{d-2}{2}+j}{N}} \leq
2^{k-q-2}.\]
Observing that
\[\frac{\lr{\ell + \frac{d-2}{2}+k}\lr{\ell
    +\frac{d-2}{2} + k-1}}{N^2} = O\lr{k^2
  \frac{(1+\ell^2)}{N^2}}\]
confirms \eqref{E:small-terms2} and completes the proof of \eqref{E:small-terms}.
For the remaining two terms, we write
\begin{align*}
a_{k,\beta}b_k + a_{k-1,\beta}b_{k-1}  & = \frac{1}{k!} \cdot \frac{(-1)^k
                                 (2k)!}{(\beta+1)_k} (-N')_k  +
                                 \frac{1}{(k-1)!} \cdot
                                 \frac{(-1)^{k-1}
                                 (2k-1)!}{(\beta+1)_{k-1}} (-N')_{k-1}(\alpha+k)\\
& = \frac{(2k)!}{k! (\beta+1)_k}\lr{\frac{N}{2}}^k
  \tilde{g}\lr{\frac{\ell}{N}}\lr{1 + \frac{\ell}{N}\lr{\beta+k-1}
  +T(\beta,k,N)} .
\end{align*}
Here $T(\beta,k,N)=O\lr{\frac{k^2}{N}}$ and is independent of $\ell$, and
\[\tilde{g}(x):=\prod_{j=0}^{k-2}\lr{1-x - \frac{2j}{N}} .\]
 We have
\[\tilde{g}(\ell/N) = \tilde{g}(0) + \frac{\ell}{N}\tilde{g}'(0) + O\lr{\frac{\ell^2}{N^2}
\sup_{x\in [0,\ell/N]}\abs{\tilde{g}''(x)}},\]
where
\begin{align*}
\tilde{g}(0) = 1 + O(k/N), \qquad \tilde{g}'(0) = (1-k)\lr{1+O(k/N)},
\end{align*} 
and
\[\sup_{x\in [0,\ell/N]}\abs{\tilde{g}''(x)} = \sup_{x\in
  [0,\ell/N]}\abs{\sum_{0\leq j_1< j_2\leq k-2} \prod_{j\neq j_1,
    j_2}\lr{1-\frac{\ell}{N} - \frac{2j}{N}}}=O(k^2).\]
Putting this all together, we obtain,
\begin{align*}
& (\alpha +1)_k\pFq{3}{2}{-k, \,k+1,\,-N'
}{\beta+1, \alpha +1}{1} \\
&=\lr{\frac{N}{2}}^k\cdot \left\{\frac{(2k)!}{k!
                       (\beta+1)_k}\lr{1+ \beta\cdot \frac{\ell}{N}+ T(\beta,k,
                       N)}+ O\lr{\frac{(1+\ell^2)}{N^2}C_3^k}\right\}
\end{align*}
as required. \qed

\section{Proof of Proposition \ref{P:step1}}\label{S:step1-proof}
To prove Proposition \ref{P:step1}, we begin with the following result, which allows us to
replace $E_{\ell,n}^V(\ep)$ by a finite number (depending on
$n$) of its jets at $\ep=0.$
\begin{proposition}\label{P:step1.1}
For any
$V\in L^\infty(\R_+)$ with $\norm{V}_{L^\infty}=1$, $n$ such that $\hbar_{n}<1$, and any $J=J(n)$
satisfying $\liminf_{n\gives \infty} J(n)/\log n=\infty,$
we have
\begin{equation}
  \sup_{\substack{
      \ell\leq n,\, \ell\equiv n \text{ (mod 2)}\\ \ep\in [0,\frac{1}{5}]}}\abs{E_{\ell,n}^V(\ep) - \sum_{j=0}^J \frac{\lr{\hbar_n\ep}^j}{j!}
\frac{d^j}{d\ep^j}\bigg|_{\ep=0} E_{\ell,n}^{V}(\ep)} = O\lr{ \hbar_n^\infty }.
\end{equation}
\end{proposition}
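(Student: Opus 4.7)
The plan is to set up analytic perturbation theory in the effective parameter $\eta := \hbar_n \ep$ for $\HO_{\hbar_n}+\eta V$, bound the Rayleigh--Schr\"odinger coefficients using the level spacing of $\HO_{\hbar_n}$, and conclude from a geometric tail estimate.

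Let $\mu_k(V)$ denote the $k$-th perturbation coefficient from \S\ref{S:perturbation-theory} for $\Op_{\hbar_n,\ell}(0) + \eta V$ on $L_\ell^2$ (taking $W = V$). Under the convention of \eqref{E:jets-def}, $\tfrac{1}{j!}\tfrac{d^j}{d\ep^j}\big|_{\ep=0} E_{\ell,n}^V(\ep) = \mu_{j-1}(V)$ for $j \geq 1$, so the truncated sum in the statement equals $E + \sum_{j=1}^J (\hbar_n\ep)^j \mu_{j-1}(V)$. Analyticity in $\eta$ also gives $E_{\ell,n}^V(\ep) = E + \sum_{j\geq 1}(\hbar_n\ep)^j \mu_{j-1}(V)$ wherever the series converges. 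The key input is the uniform bound
\[
|\mu_k(V)| \leq \hbar_n^{-k}, \qquad k\geq 0.
\]
I would obtain this by applying Lemma \ref{L:level-spacing} with the rescaled perturbation $\hbar_n V$ in place of $W$: the spectrum of $\Op_{\hbar_n,\ell}(0)$ on $L_\ell^2$ has consecutive gaps $\geq 2\hbar_n$, and $\|\hbar_n V\|_{L^\infty} = \hbar_n$, so the estimate $\|G\|\leq \eta^{-1}\|W\|\leq 1$ from the proof of that lemma gives $|\mu_k(\hbar_n V)| \leq \|\hbar_n V\|_{L^\infty} = \hbar_n$ for every $k$ and every $\ell$. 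Because the formula \eqref{E:muj-G} makes $\mu_k$ a homogeneous expression of degree $k+1$ in the perturbation (one factor of $W$ outside $G^{(k)}$, $k$ factors inside), $\mu_k(\hbar_n V) = \hbar_n^{k+1}\mu_k(V)$, and the claimed bound drops out.

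The tail estimate is then immediate: for $\ep \in [0,1/5]$,
\[
\bigg| E_{\ell,n}^V(\ep) - \sum_{j=0}^J \frac{(\hbar_n\ep)^j}{j!}\frac{d^j}{d\ep^j}\bigg|_{\ep=0} E_{\ell,n}^V(\ep) \bigg| = \bigg|\sum_{j > J} (\hbar_n\ep)^j \mu_{j-1}(V)\bigg| \leq \hbar_n \sum_{j > J}\ep^j \leq \tfrac{5\hbar_n}{4}\cdot 5^{-(J+1)}.
\]
The crucial cancellation is that the $\hbar_n^j$ from $(\hbar_n\ep)^j$ exactly absorbs the $\hbar_n^{-(j-1)}$ growth of $\mu_{j-1}(V)$, leaving a pure geometric factor $\ep^j$ with one residual $\hbar_n$. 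Since $\hbar_n = E/(n+d/2)$ gives $\log(1/\hbar_n) = \log n + O(1)$, the hypothesis $J(n)/\log n \to \infty$ forces $5^{-J(n)}$ to decay faster than any fixed power of $\hbar_n$, which is exactly $O(\hbar_n^\infty)$; the bound is manifestly uniform in $\ell$ and $\ep\in[0,1/5]$, so the supremum satisfies the same estimate.

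The main conceptual point, rather than a technical obstacle, is to recognize that perturbation theory must be done in the effective parameter $\eta = \hbar_n \ep$ instead of $\ep$ directly: a naive $\ep$-expansion together with the trivial estimate $\|\ep\hbar_n V\|\leq \hbar_n/5$ only yields $|E^V(\ep)-E|=O(\hbar_n)$, which is far too weak to beat the $(\hbar_n\ep)^j$ weights in the truncated Taylor sum. It is the $(k+1)$-homogeneity of $\mu_k$ combined with the level spacing $\hbar_n$ that supplies precisely the $\hbar_n^{-k}$ compensation required to turn the series into a pure geometric object in $\ep$.
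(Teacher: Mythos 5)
Your argument is correct and follows essentially the same route as the paper's: apply Lemma \ref{L:level-spacing} to get the uniform level spacing $\gtrsim\hbar_n$ for $\Op_{\hbar_n,\ell}(\ep)$, bound the Rayleigh--Schr\"odinger coefficients $\mu_k$ through the operator $G$ of \eqref{E:G-def}, and close with a geometric tail estimate using $\ep\le 1/5$ and $J(n)/\log n\to\infty$ (so $5^{-J}=O(\hbar_n^\infty)$). Making the effective parameter $\eta=\hbar_n\ep$ and the $(k{+}1)$-homogeneity of $\mu_k$ explicit is a slightly cleaner way to organize the bookkeeping than the paper's direct manipulation of $\frac{d^j}{d\ep^j}E_{\ell,n}^V$, but the underlying estimates and the source of the $O(\hbar_n^\infty)$ decay are identical.
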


\begin{proof}
  Applying Lemma \ref{L:level-spacing} with $W(r) = \ep\hbar_nV(r^2)$, we find that for
  every $\ep\in [0,1/5]$
    \begin{equation}\label{E:spacing-est}
      \sup_{
      \ell\leq n,\, \ell\equiv n \text{ (mod 2)}}
  d\lr{E_{\ell, n}^V(\ep), \, \Spec(\Op_{\hbar_n,
      \ell}(\ep))\backslash \set{ E_{\ell, n}^V(\ep)}} > \frac{\hbar_n}{2},
    \end{equation}
where $d(x, A)$ denotes the distance from a point $x$ to a set $A.$ As explained in \S \ref{S:perturbation-theory}, we have
\[\frac{d^j}{d\ep^j}E_{\ell, n}^V(\ep)= (-1)^{j-1}\inprod{V
  \psi_{\hbar_n, \ell,n}(\ep)}{[G_\ell(\ep)]^{(j-1)}(\psi_{\hbar_n,\ell, n})},\]
where
\begin{equation}\label{E:Gell-def}
G_\ell(\ep)= \lr{\Op_{\hbar_n, \ell}-E_{\ell, n}^V(\ep)}^{-1}\circ
\Pi_{\psi_{\hbar_n, \ell, n}^{\perp}}\circ V,
\end{equation}
and $V$ denotes multiplication by the function $V(r^2).$ Hence, using
\eqref{E:spacing-est} and that $\norm{V}_{L^\infty}\leq1$,
we find
\begin{equation}\label{E:Gell-norm-est}
\sup_{\substack{
      \ell\leq n,\, \ell\equiv n \text{ (mod 2)}\\ \ep\in [0,1/5]}}
  \norm{G_\ell(\ep)}\leq \frac{2}{\hbar_n}\quad \Rightarrow\quad  \sup_{\substack{
      \ell\leq n,\, \ell\equiv n \text{ (mod 2)}\\ \ep\in
      [0,1/5]}}\abs{\frac{d^j}{d\ep^j}E_{ \ell,
      n}^V(\ep)}\leq \lr{\frac{2}{\hbar_n}}^j.
\end{equation}
Applying Taylor's theorem then gives
\[ \sup_{\substack{
      \ell\leq n,\, \ell\equiv n \text{ (mod 2)}\\ \ep\in [0,1/5]}}\abs{E_{\ell,n}^V(\ep) - \sum_{j=0}^J \frac{\lr{\hbar_n\ep}^j}{j!}
\frac{d^j}{d\ep^j}\bigg|_{\ep=0} E_{ \ell,n}^{V}(\ep)} = O\lr{\lr{2^J\lr{J+1}!}^{-1}}=
O(\hbar_n^\infty)\]
since $(J+1)!\geq e^{\lr{-\log \hbar_n}^2}.$
\end{proof} 
To complete the proof of Proposition \ref{P:step1}, it remains to
check that, provided $\hbar_n<1$ and $J(n),K(n)$ satisfy \eqref{E:KJ-assumptions}, we have
\begin{equation}\label{E:step1-goal1}
  \sup_{\substack{
      \ell\leq n,\, \ell\equiv n \text{ (mod 2)}\\ 0\leq j \leq J}}\hbar_n^j\abs{\frac{d^j}{d\ep^j}\bigg|_{\ep=0} E_{ \ell,n}^{V}(\ep) - \frac{d^j}{d\ep^j}\bigg|_{\ep=0} E_{\ell,n}^{V_K}(\ep)} = O(\hbar_n^\infty).
\end{equation}
To prove this estimate, we again use
\[\frac{d^j}{d\ep^j}\bigg|_{\ep=0} E_{\ell,n}^{V}(\ep)= (-1)^{j-1} \inprod{V\psi_{\hbar_n, \ell, n}}{G_{\ell}^{(j-1)}(\psi_{\hbar_n, \ell,n})},\]
where 
\[G_{\ell}=G_{\ell}(0)=\lr{\Op_{\hbar_n, \ell}(0)-E}^{-1}\circ
\Pi_{\psi_{\hbar_n, \ell, n}}^{\perp}\circ V.\]
Setting for each $K\geq 1$, 
\[G_{\ell,K}:=\lr{\Op_{\hbar_n, \ell}(0)-E}^{-1}\circ
\Pi_{\psi_{\hbar_n, \ell, n}}^{\perp}\circ V_K,\]
we have
\[\frac{d^j}{d\ep^j}\bigg|_{\ep=0} E_{\hbar_n,
  n, \ell}^{V_K}(\ep)= (-1)^{j-1} \inprod{V_K\psi_{\hbar_n, \ell,
    n}}{G_{\ell,K}^{(j-1)}(\psi_{\hbar_n, \ell,n})}\] 
and \eqref{E:step1-goal1} reduces to showing that for each $j\leq J$
and every $\ell\leq n,\, \ell\equiv n \text{ (mod 2)}$
\begin{equation}\label{E:step1-goal2}
 \abs{\hbar_n^j\lr{\inprod{V \psi_{\hbar_n, \ell,
        n}}{G_{\ell}^{(j-1)}(\psi_{\hbar_n, \ell, n})} - \inprod{V_K\psi_{\hbar_n, \ell,
        n}}{G_{\ell,K}^{(j-1)}(\psi_{\hbar_n, \ell, n})}}} = O(\hbar_n^\infty),
\end{equation}
with the implied constant independent of $j,\ell,n.$ We will establish
\eqref{E:step1-goal2} by induction with the
help of the following lemma. 

\begin{Lem}\label{L:truncate-potential}
Suppose $\hbar_n<1$ and $J=J(n),\, K=K(n)$ satisfy \eqref{E:KJ-assumptions}. Then, there exists a constant $C_1>0$ so that if $V$is a $\delta-$slowly varying potential for the energy $E$, with $\delta\in(0,(C_1E)^{-1})$, 
\begin{equation}
  \sup_{\substack{ 
      \abs{m-n}\leq\frac{n}{2}\\ \norm{X}=1}} \abs{\inprod{\lr{V -V_K}
      \psi_{\hbar_n, \ell, m}} {X}} = O\lr{\hbar_n^\infty}.
\end{equation} 
\end{Lem}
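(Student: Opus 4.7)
Proof plan: By Cauchy--Schwarz, it suffices to show
\begin{equation*}
\norm{(V - V_K)\psi_{\hbar_n,\ell,m}}_{L^2(\R_+,\,r^{d-1}dr)} = O\lr{\hbar_n^\infty}
\end{equation*}
uniformly for $\abs{m-n}\leq n/2$ and $\ell\leq m$ with $\ell\equiv m$ (mod 2). The strategy is to split the integration at the cutoff $r_0 := \sqrt{4E}$, the boundary of the allowed region for energy $2E$: Taylor's theorem handles the bulk $[0,r_0]$ where $V$ is accurately approximated by its truncated jet, while a moment argument handles the tail $[r_0,\infty)$ where the eigenfunction is concentrated only by a polynomially small amount.

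On the bulk $[0,r_0]$: Since $V$ is $\delta$-slowly varying, Taylor's remainder formula applied to $V$ at $0$ yields, for any $r\in[0,r_0]$,
\begin{equation*}
\abs{V(r^2) - V_K(r)} \leq \delta^{K+1} r^{2(K+1)} \leq (4E\delta)^{K+1}.
\end{equation*}
Choosing $C_1>4$ in the hypothesis $\delta<(C_1E)^{-1}$ gives the pointwise bound $(4/C_1)^{K+1}$, and the hypothesis $K(n)/\log n\to\infty$ then renders this $O\lr{\hbar_n^\infty}$. Combined with $\norm{\psi_{\hbar_n,\ell,m}}_{L^2}=1$, this handles the bulk contribution.

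On the tail $[r_0,\infty)$: Using $\norm{V}_{L^\infty}\leq 1$ and the crude geometric-series estimate $\abs{V_K(r)}\leq (K+1)(1+(\delta r^2)^K)$, we get
\begin{equation*}
\abs{V(r^2)-V_K(r)}^2 \leq C(K+1)^2\bigl(1+(\delta r^2)^{2K}\bigr).
\end{equation*}
The tail contribution thus reduces to controlling $\int_{r_0}^\infty\abs{\psi_{\hbar_n,\ell,m}}^2 r^{d-1}dr$ and $\delta^{2K}\int_{r_0}^\infty r^{4K}\abs{\psi_{\hbar_n,\ell,m}}^2 r^{d-1}dr$. Both are handled by a Markov-type tail estimate based on the moment bound
\begin{equation*}
\int_0^\infty r^{2k}\abs{\psi_{\hbar_n,\ell,m}}^2 r^{d-1}\, dr \leq (CE)^k, \qquad k\leq n,
\end{equation*}
which follows from the explicit Laguerre representation \eqref{E:Laguerre} together with the classical moment formulas for Laguerre weights. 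Choosing the Markov exponent $M=\lfloor(n-2K)/2\rfloor$, which is legitimate since $K(n)/n\to 0$ (a consequence of $K(n)J(n)/n\to 0$), the tail integrals are bounded by $(C/4)^n$ times a polynomial in $K$, which is $O\lr{\hbar_n^\infty}$ provided $C_1$ is chosen large enough to ensure $C<4$.

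The hard part is the tail estimate: the polynomial factor $(\delta r^2)^{K}$ of growing degree $2K$ must be dominated by the eigenfunction's concentration near the allowed region. The Markov argument is delicate because the exponent $M$ must stay at most $n/2$ for the moment estimate $(CE)^k$ to hold, while simultaneously being large enough in $n$ to produce $O(\hbar_n^\infty)$ smallness. This is precisely what the two-sided constraint on $K(n)$ enables: $K(n)/\log n \to \infty$ drives the Taylor error to $O(\hbar_n^\infty)$, while $K(n)J(n)/n\to 0$ leaves room for the Markov exponent $M$.
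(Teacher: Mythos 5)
Your decomposition at $r_0=\sqrt{4E}$ and the Taylor--remainder estimate on the bulk $[0,r_0]$ match the paper exactly (the paper introduces a cutoff $\chi$ that is $1$ on $[0,\sqrt{4E}]$ and applies \eqref{E:slowly-varying} there). Where you diverge is the tail: the paper controls the three pieces $(1-\chi)V\psi$, $\chi(V-V_K)\psi$, $(1-\chi)V_K\psi$ and handles the two forbidden-region pieces by invoking the Agmon-type exponential decay \eqref{E:Agmon}, together with $K(n)/n\to 0$ to beat the polynomial growth of $V_K$; you replace the Agmon estimate by a moment/Markov argument.

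The gap is in the tail step, and it is twofold. First, the moment bound
\[
\int_0^\infty r^{2k}\abs{\psi_{\hbar_n,\ell,m}(r)}^2 r^{d-1}\,dr \leq (CE)^k,\qquad k\leq n,
\]
is asserted without verification, and the whole Markov argument hinges on the intrinsic constant $C$ being strictly less than $4$ so that $(C/4)^M$ decays. This is not a free parameter: $C$ is fixed by the Laguerre moment integrals $\hbar_n^k(\alpha+1)_k\,{}_3F_2[-k,k+1,-m';1,\alpha+1;1]$ uniformly over $\abs{m-n}\leq n/2$ and $k$ up to order $n$, where the quantum tail contribution (not just the classical support $r^2\lesssim 3E$) starts to matter. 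Second, your remark that ``$C_1$ is chosen large enough to ensure $C<4$'' is a non-sequitur: $C_1$ only constrains the slow-variation parameter $\delta$ of $V$, whereas $C$ is a universal constant of the unperturbed eigenfunctions and does not move when you shrink $\delta$. Enlarging $C_1$ does let you absorb the factor $(\delta CE)^{2K}$, but it cannot rescue $(C/4)^M$ if $C\geq 4$. To close this gap you would either have to prove the moment bound with an explicit $C<4$ uniformly in the stated range of $k,m,\ell$, or else move the cutoff $r_0$ outward (which then requires a separate argument on the intermediate annulus where the Taylor estimate no longer applies). The paper's route via the Agmon estimate \eqref{E:Agmon} sidesteps this constant-chasing entirely, since pointwise exponential decay $e^{-c r^2/\hbar_n}$ beats any polynomial of degree $\ll n/\hbar_n$ without a delicate comparison of bases.
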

\begin{proof}
Let $\chi(r)$ be an auxiliary cut-off function that equals $1$ for
$r\leq \sqrt{4E}$ and $0$ otherwise. Then, since $V$ is bounded, by 
the exponential decay \eqref{E:Agmon} of $\psi_{\hbar_n, \ell, m}$
\begin{equation} \label{E:truncate-potential1}
 \sup_{\substack{ 
      \abs{m-n}\leq\frac{n}{2}\\ \norm{X}=1}} \abs{\inprod{\lr{1-\chi}V
      \psi_{\hbar_n, \ell, m}} {X}} = O\lr{\hbar_n^\infty}.
\end{equation}
Using the definition \eqref{E:slowly-varying} that $V$ is
$\delta-$slowly varying on the support of 
$\chi$ and the assumption $\lim\sup_{n\to\infty}K(n)/\log n = \infty$,
we have, for all $\delta E$  sufficiently small
\begin{equation} \label{E:truncate-potential2}
 \sup_{\substack{ 
      \abs{m-n}\leq\frac{n}{2}\\ \norm{X}=1}} \abs{\inprod{\chi\lr{V-V_K}
      \psi_{\hbar_n, \ell, m}} {X}} = O\lr{\hbar_n^\infty}.
\end{equation}
Finally, again using the exponential decay \eqref{E:Agmon} of $\psi_{\hbar_n, \ell, m}$ and  $\lim\inf_{n\to\infty}K(n)/ n = 0,$ we obtain
\begin{equation} \label{E:truncate-potential3}
 \sup_{\substack{
      \abs{m-n}\leq\frac{n}{2}\\ \norm{X}=1}} \abs{\inprod{\lr{1-\chi}V_K
      \psi_{\hbar_n, \ell, m}} {X}} = O\lr{\hbar_n^\infty},
\end{equation}
which completes the proof.
\end{proof}

To prove \eqref{E:step1-goal2} by induction, note that Lemma
\ref{L:truncate-potential} is precisely the base case $j=1.$ Next,
suppose we have already shown \eqref{E:step1-goal2} for some $j\geq
1.$ Then, using Lemma \ref{L:truncate-potential} and the norm estimate
from \eqref{E:Gell-norm-est}, we have
\begin{equation}\label{E:V-to-VK}
\hbar_n^{j}\inprod{V\psi_{\hbar_n, \ell, n}}{G_{\ell}^{(j)}(\psi_{\hbar_n,
    \ell,n})}=\hbar_n^{j}\inprod{V_K\psi_{\hbar_n, \ell,
    n}}{G_{\ell}^{(j)}(\psi_{\hbar_n, \ell,n})}+ O(\hbar_n^\infty).
\end{equation}
The adjoint of $G_{\ell}$ is 
\[G_{\ell}^*=V\circ \lr{\Op_{\hbar_n, \ell}- E}^{-1}\circ
\Pi_{\psi_{\hbar_n, \ell, n}}^{\perp}\]
and hence
\[G_{\ell}^*\lr{V_K \psi_{\hbar_n,\ell, n}}=
\hbar_n^{-1}\sum_{\substack{m\text{ s.t. } \abs{m-n}\leq 2K\\ m\neq n}}\frac{\inprod{V_K
  \psi_{\hbar_n,\ell, m}}{\psi_{\hbar_n,\ell, n}}}{m-n}
\psi_{\hbar_n, \ell, m}.\]  
The sum in the previous line is truncated to $\abs{m-n}\leq 2K$ since
by Proposition \ref{P:step2}, the numerator vanishes
unless $\abs{m-n}\leq 2K.$ To complete the proof, we write
\begin{align*}
 & \hbar_n^j \inprod{V\psi_{\hbar_n, \ell, n}}{G_{\ell}^{(j)}(\psi_{\hbar_n,
    \ell,n})} = \hbar_n^j \inprod{V_K \psi_{\hbar_n, \ell, n}}{G_\ell^{(j)}(\psi_{\hbar_n, \ell, n})}+O(\hbar^\infty)\\
&= \sum_{\substack{m\text{ s.t. } \abs{m-n}\leq 2K\\ m\neq n}}\frac{\inprod{V_K
  \psi_{\hbar_n, \ell,m}}{\psi_{\hbar_n,\ell,n}}}{m-n}
\hbar^{j-1} \inprod{V\psi_{\hbar_n, \ell, m}}{G_{\ell}^{(j-1)}\psi_{\hbar_n,
  \ell, n}} + O\lr{\hbar_n^\infty}\\
&= \sum_{\substack{m\text{ s.t. } \abs{m-n}\leq 2K\\ m\neq n}}\frac{\inprod{V_K
  \psi_{\hbar_n,\ell, m}}{\psi_{\hbar_n, \ell,n}}}{m-n}
\hbar^{j-1} \inprod{V_K\psi_{\hbar_n, \ell, m}}{G_{\ell,K}^{(j-1)}\psi_{\hbar_n,
  \ell, n}} + O\lr{\hbar_n^\infty}\\
&=\hbar_n^j \inprod{V_K\psi_{\hbar_n, \ell, n}}{G_{\ell,K}^{(j)}(\psi_{\hbar_n,
    \ell,n})}+O\lr{\hbar_n^\infty},
\end{align*}
where in the second-to-last line we used \eqref{E:V-to-VK} the
inductive hypothesis and the fact that $\limsup_{n\gives \infty}\frac{K(n)}{n}=0.$

\bibliography{mybib}{}

\begin{thebibliography}{10}

\bibitem{beck2016nodal}
Thomas Beck, Spencer~T Becker-Kahn, and Boris Hanin.
\newblock Nodal sets of smooth functions with finite vanishing order and
  p-sweepouts.
\newblock {\em arXiv preprint arXiv:1604.04307}, 2016.

\bibitem{berard2014number}
Pierre B{\'e}rard and Bernard Helffer.
\newblock On the number of nodal domains of the 2d isotropic quantum harmonic
  oscillator--an extension of results of {A.} {Stern}.
\newblock {\em arXiv preprint arXiv:1409.2333}, 2014.

\bibitem{berard2015nodal}
Pierre B{\'e}rard and Bernard Helffer.
\newblock On the nodal patterns of the 2d isotropic quantum harmonic
  oscillator.
\newblock {\em arXiv preprint arXiv:1506.02374}, 2015.

\bibitem{berard2017some}
Pierre B{\'e}rard and Bernard Helffer.
\newblock Some nodal properties of the quantum harmonic oscillator and other
  {Schr{\"o}dinger} operators in {$R^2$}.
\newblock {\em arXiv preprint arXiv:1506.02374}, 2017.

\bibitem{bies2002nodal}
WE~Bies and EJ~Heller.
\newblock Nodal structure of chaotic eigenfunctions.
\newblock {\em Journal of Physics A: Mathematical and General}, 35(27):5673,
  2002.

\bibitem{canzani2016nodal}
Yaiza Canzani and John~A Toth.
\newblock Nodal sets of {Schr{\"o}dinger} eigenfunctions in forbidden regions.
\newblock In {\em Annales Henri Poincar{\'e}}, volume~17, pages 3063--3087.
  Springer, 2016.

\bibitem{donnelly1988nodal}
Harold Donnelly and Charles Fefferman.
\newblock Nodal sets of eigenfunctions on {Riemannian} manifolds.
\newblock {\em Inventiones mathematicae}, 93(1):161--183, 1988.

\bibitem{erdelyi2010asymptotic}
Arthur Erdelyi.
\newblock Asymptotic expansions.
\newblock 2010.

\bibitem{gichev2009some}
Victor Gichev.
\newblock Some remarks on spherical harmonics.
\newblock {\em St. Petersburg Mathematical Journal}, 20(4):553--567, 2009.

\bibitem{hanin2014nodal}
Boris Hanin, Steve Zelditch, and Peng Zhou.
\newblock Nodal sets of random eigenfunctions for the isotropic harmonic
  oscillator.
\newblock {\em International Mathematics Research Notices},
  2015(13):4813--4839, 2014.

\bibitem{hanin2017scaling}
Boris Hanin, Steve Zelditch, and Peng Zhou.
\newblock Scaling of harmonic oscillator eigenfunctions and their nodal sets
  around the caustic.
\newblock {\em Communications in Mathematical Physics}, 350(3):1147--1183,
  2017.

\bibitem{hardt1999critical}
R~Hardt, M~Hoffmann-Ostenhof, T~Hoffmann-Ostenhof, N~Nadirashvili, et~al.
\newblock Critical sets of solutions to elliptic equations.
\newblock {\em J. Differential Geom}, 51(2):359--373, 1999.

\bibitem{hardt1987nodal}
Robert Hardt, Leon Simon, et~al.
\newblock {\em Nodal sets for solutions of elliptic equations}.
\newblock Centre for Mathematical Analysis, ANU, 1987.

\bibitem{Heller-gallery}
Eric Heller.
\newblock Eric {J.} {Heller} gallery.
\newblock {\em
  \url{http://ejheller.jalbum.net/Eric\%20J\%20Heller\%20Gallery/slides/Nodal8.html}}.
\newblock Accessed: 2017-07-16.

\bibitem{hoffmann1988asymptotics2}
M~Hoffmann-Ostenhof.
\newblock Asymptotics of the nodal lines of solutions of 2-dimensional
  schr{\"o}dinger equations.
\newblock {\em Mathematische Zeitschrift}, 198(2):161--179, 1988.

\bibitem{hoffmann1988asymptotics}
Maria Hoffmann-Ostenhof and Thomas Hoffmann-Ostenhof.
\newblock On the asymptotics of nodes of l2-solutions of schr{\"o}dinger
  equations in dimensions {$\geq 3$}.
\newblock {\em Communications in mathematical physics}, 117(1):49--77, 1988.

\bibitem{hoffmann1986continuity}
Maria Hoffmann-Ostenhof, Thomas Hoffmann-Ostenhof, and J{\"o}rg Swetina.
\newblock Continuity and nodal properties near infinity for solutions of
  2-dimensional schr{\"o}dinger equations.
\newblock {\em Duke mathematical journal}, 53(1):271--306, 1986.

\bibitem{hoffmann1987asymptotics}
Maria Hoffmann-Ostenhof, Thomas Hoffmann-Ostenhof, and J{\"o}rg Swetina.
\newblock Asymptotics and continuity properties near infinity of solutions of
  schr{\"o}dinger equations in exterior domains.
\newblock {\em Ann. Inst. H. Poincar{\'e}46}, pages 247--280, 1987.

\bibitem{jin2017semiclassical}
Long Jin.
\newblock Semiclassical {Cauchy} estimates and applications.
\newblock {\em Transactions of the American Mathematical Society},
  369(2):975--995, 2017.

\bibitem{suslov2008hahn}
Sergei~K Suslov and Benjamin Trey.
\newblock The {Hahn} polynomials in the nonrelativistic and relativistic
  {Coulomb} problems.
\newblock {\em Journal of Mathematical Physics}, 49(1):012104, 2008.

\bibitem{Taylor-dispert}
Michael Taylor.
\newblock Self-adjoint perturbations in the discrete case.
\newblock {\em \url{http://www.unc.edu/math/Faculty/met/dispert.pdf}}.
\newblock Accessed: 2017-07-16.

\end{thebibliography}
\bibliographystyle{plain}

\end{document}